\documentclass[]{article}

\usepackage{amsthm,amssymb,amsmath}
\usepackage[margin=1in]{geometry}

\parindent=0pt

\newtheorem{theorem}{Theorem}
\newtheorem{lemma}[theorem]{Lemma}

\newtheorem{corollary}[theorem]{Corollary}

\newcommand{\NP}{\normalfont{NP}}

\newcommand{\FPT}{\normalfont{FPT}}

\newcommand{\Card}[1]{|#1|}
\newcommand{\Yes}{\textsc{Yes}}

\newcommand{\SB}{\{\,}
\newcommand{\SM}{\;{|}\;}
\newcommand{\SE}{\,\}}

\newcommand{\pcproblem}[4]{\begin{samepage}\begin{quote} \textsc{#1}\\ 
\textit{Instance:} #2\\ \textit{Parameter:} #3\\ \textit{Question:} #4 \end{quote}\end{samepage}}
\newcommand{\classicalproblem}[3]{\begin{samepage}\begin{quote} \textsc{#1}\\ 
\textit{Instance:} #2\\ \textit{Question:} #3 \end{quote}\end{samepage}}

\DeclareMathOperator{\tw}{tw}
\DeclareMathOperator{\ltw}{ltw}

\begin{document}

\title{On the Treewidth of Dynamic Graphs}
\author{Bernard Mans \and Luke Mathieson}
\date{}

\maketitle

\begin{abstract}
\emph{Dynamic} graph theory is a novel, growing area that deals with graphs that change over time and is of great utility in modelling modern wireless, mobile and dynamic environments. As a graph evolves, possibly arbitrarily, it is challenging to identify the graph properties that can be preserved over time and understand their respective computability. 

In this paper we are concerned with the {\em treewidth} of dynamic graphs. We focus on 
\emph{metatheorems}, which allow the generation of a series of results based on general properties of classes of structures. In graph theory two major metatheorems on treewidth provide complexity classifications by employing structural graph measures and finite model theory. Courcelle's Theorem gives a general tractability result for problems expressible in monadic second order logic on graphs of bounded treewidth, and Frick \& Grohe demonstrate a similar result for first order logic and graphs of bounded local treewidth.

We extend these theorems by showing that dynamic graphs of bounded (local) treewidth where the length of time over which the graph evolves and is observed is finite and bounded can be modelled in such a way that the (local) treewidth of the underlying graph is maintained. We show the application of these results to problems in dynamic graph theory and dynamic extensions to static problems. In addition we demonstrate that certain widely used dynamic graph classes naturally have bounded local treewidth.
\end{abstract}

\section{Introduction}

Graph theory has proven to be an extremely useful tool for modelling computational systems and with the advent and growing preponderance of mobile devices and dynamic systems  it is natural that graph theory is extended and adapted to capture the evolving aspects of these environments. Dynamic graphs have been formalised in a number of ways: e.g., time-varying graphs~\cite{CasteigtsFlocchiniMansSantoro10,CasteigtsFQS11,FlocchiniMS09}, Carrier-based networks~\cite{BrejovaDKV11}, evolving graphs~\cite{CasteigtsCF09,Ferreira04,XuanFerreiraJarry03}, dynamic networks~\cite{KuhnO11,KuhnLO10}, scheduled networks~\cite{Berman96}, temporal networks~\cite{KKK00}, opportunistic networks~\cite{ChaintreauMMD07,JacquetMMR08}, Markovian~\cite{ClementiMS11}.
When considering the dynamic aspects of a dynamic graph, even classically simple properties such as shortest paths become more complex to compute and may even become definitionally ambiguous~\cite{XuanFerreiraJarry03}.

In this paper, we are not particularly interested in any particular dynamic model. Initially we will loosely use the term \emph{dynamic graph} and for our purpose we will define the term formally using the simplest possible definition as it is a generalized and reasonably assumption free model for a dynamic graph. For this paper, one of the key questions in moving from static graph theory to dynamic graph theory concerns the preservation of properties and their computability.

One important general approach to the complexity and computability of properties in (static) graphs is the application of \emph{metatheorems} which classify large classes of problems. Arguably the most famous of these metatheorems is \emph{Courcelle's theorem} (stated and proved over a series of articles from~\cite{Courcelle90} to~\cite{Courcelle06}) which gives a polynomial time algorithm for any monadic second-order expressible property on graphs of bounded treewidth. More precisely the model checking problem for monadic second-order logic is fixed-parameter tractable with parameter $\Card{\phi}+k$ where $\phi$ is the sentence of logic and $k$ is the treewidth of the input structure. Frick and Grohe~\cite{FrickGrohe01} give a similar metatheorem for first-order logic and structures of locally bounded treewidth, which allows a greater class of structures (all structures of bounded treewidth have locally bounded treewidth), but constrains the logical expressibility. In fact Dvo\v{r}\'{a}k \emph{et al.}~\cite{DvorakKT10} show that properties expressible in first-order logic are decidable in linear time for graphs of bounded expansion, a superclass of several classes of sparse graphs, including those with bounded local treewidth.
Stewart~\cite{Stewart08} demonstrates that Frick and Grohe's result holds if the bound on the local treewidth is dependent on the parameter of the input problem, rather than simply constant.

Such metatheorems are extremely useful classification tools, and having them available for use in the context of dynamic graphs would be highly desirable. Two questions immediately arise when considering such an extension --- are there any restrictions that have to be made to the logic and are there any further constraints on the structures (in this case the graphs)? Answering the first question is simple: no. An apparent natural match for dynamic graphs is temporal logic (a form of modal logic), however temporal logic has a simple canonical translation into first-order and monadic second-order logic. The addition of a ``time'' relation, that provides a temporal ordering is sufficient to capture the ideas of temporal logic. Thus we are not limited (more than before) with the logics that are applicable. With regards to the structure, we must first consider the setting of these metatheorems. They are both true in the context of finite model theory, emphasizing the \emph{finite}. Therefore we are immediately limited to finite temporal domains (a finite domain for the vertices and hence edges of the graph is expected of course). Beyond this if we are to capture the temporal aspect in the structure, we must do so in a manner that is both useable and respects the constraints on the structure necessary for the metatheorems --- namely bounded treewidth and local treewidth respectively.

Dynamic problems have been approached before in a number of ways. Hagerup~\cite{Hagerup00} examines the situation where the logical expression changes (\emph{i.e.} the question being asked changes), and Bodlaender~\cite{Bodlaender93}, Cohen \emph{et al.}~\cite{CohenSTV93} and Frederickson~\cite{Frederickson98} give a variety of results that deal with graphs that change, requiring changes in the tree decompositions used. In contrast this work deals with problems that include the temporal dimension as an intrinsic aspect, rather than problems that can be answered at an instant in time (and then may need to be recomputed). A simple example is a \emph{journey}, the temporal extension of a path, where the route between two vertices may not exist at any given point of time, but as the graph changes the schedule of edges may allow traversal.

The remainder of this paper deals with the translation of dynamic graphs into structures that maintain these bounds if the original graph did so at every point in time. In addition we demonstrate the utility of these metatheorems in classification by application to some open problems on dynamic graphs.

\section{Preliminaries}

\subsection{Graphs and Dynamic Graphs}

The graphs we employ are simple graphs, both directed and undirected. For the usual graph properties we use the usual notation (\emph{qq.v.} Diestel~\cite{Diestel2000}). 

We define a \emph{dynamic graph} for our purposes as a (di)graph augmented with a set $T$ of times, a function $\zeta_{e}$ which maps the edge set to a subset of $T$ and a function $\zeta_{v}$ that maps the vertex set to subsets of $T$, representing the times at which the edge or vertex exists. We will consider a discrete and finite set $T$ of times for two main reasons: (i) typically, in most applications at hand, a continuous set $T$ could be easily made discrete when considering time of changes of the graph as an event-based model; (ii) similarly, applications of interest are not focussed on the observation of the evolution of the graph over an infinite period, but rather on some sufficiently large, yet finite, period where some expected or asymptotic behaviour can be identified. 

More formally a dynamic graph $G=(V,E,T,\zeta_{v},\zeta_{e})$ consists of
\begin{itemize}
\item A set $V$ of vertices.
\item A set $E\subset V\times V$ of edges (where $vv \notin E$ for any $v \in V$).
\item A set $T$ of times.
\item A function $\zeta_{v}: V \rightarrow 2^{T}$.
\item A function $\zeta_{e}: E \rightarrow 2^{T}$.
\end{itemize}

We denote the static graph derived from a dynamic graph $G$ by taking the snapshot at time $t$ as $G_{t}$. We do not allow edges to exist when either of their vertices do not, although the case where they do could be of interest in modelling, for example, wireless networks with long transmission ranges. We also assume that there is some order $<_{T}$ defined over $T(G)$, however except where specified we do not assume that this order has any particular properties (\emph{i.e.} we assume neither totality nor transitivity).

This dynamic graph definition is independent from other referenced definitions, though relationships can be drawn. Again, here, we aim to focus on the graph theoretic aspects, rather than modelling a particular system, thus we attempt to keep the definition as simple and open as possible. We also note that a similar definition with a continuous notion of time can be converted to an \emph{event-based} model, which then fulfils the requirements of our definition.

\subsection{Treewidth and Local Treewidth}

A tree decomposition $\mathbb{T}$ of a graph $G$ is a tree $\mathcal{T}$ and mapping $\beta: V(\mathcal{T}) \rightarrow 2^{V(G)}$ with the following properties:

\begin{itemize}
\item $\bigcup_{t\in V(\mathcal{T})}\beta(t) = V(G)$.
\item For each edge $uv \in E(G)$ there is some $t\in V(\mathcal{T})$ such that $u,v \in \beta(t)$.
\item For each $v \in V(G)$, the set $\SB{} t\in V(\mathcal{T}) \SM{} v \in \beta(t) \SE{}$ of tree vertices induces a connected subgraph of $\mathcal{T}$.
\end{itemize}

As is standard, we call $\beta(t)$ for each $t$ a \emph{bag}.

The width of a decomposition is defined as $\max\SB{} \Card{\beta(t)} -1 \SM{} t \in V(\mathcal{T}) \SE{}$, then the treewidth of a graph $G$ (denoted $\tw(G)$) is the minimum width of a decomposition of $G$ over all possible decompositions. A class of graphs then has \emph{bounded} treewidth if there exists some constant $k$ (or parameter in the context of parameterized complexity) such that every graph in the class has treewidth at most $k$.

Let $N^{G}_{r}(v)$ denote the $r$-neighborhood of a vertex $v$ in the graph $G$. Given a graph $G$ and an integer $r$ the local treewidth (denoted $\ltw(G,r)$) is $\max\SB\tw([N^{G}_{r}(v)])\SM v\in V\SE$. A class of graphs has \emph{bounded local treewidth} if there exists some function $f$ such that for every $r$ every graph in the class has local treewidth at most $f(r)$. A class has \emph{effectively} bounded local treewidth if $f$ is also computable. Graph classes that have bounded local treewidth (but not necessarily bounded treewidth) include graphs of bounded degree and planar graphs.

Both treewidth and local treewidth can be transferred to general relational structures by taking the (local) treewidth of the Gaifman graph (q.v. Section~\ref{subsec:prelims_logics}).

\subsection{Parameterized Complexity}

Parameterized Complexity augments combinatorial problems using parameters as independent measures of structure in addition to the overall size of the input. An instance $(I,k)$ of a parameterized problem consists of the input $I$, corresponding to the input of a classical problem and an integer parameter $k$, a special part of the input independent from $\Card{I}$.

A problem is \emph{fixed-parameter tractable} (or in $\FPT{}$) if there is an algorithm that solves each instance $(I,k)$ of the problem in time bounded by $f(k)\cdot\Card{I}^{O(1)}$ where $f$ is a computable function dependent only on $k$.

Conversely hardness for any class in the ${\normalfont W}$-hierarchy provides evidence that a problem is not fixed-parameter tractable. Hardness for such classes is typically established by \emph{parameterized reduction}, the Parameterized Complexity reduction scheme where given an instance $(I,k)$ of problem $\Pi_{1}$ an instance $(I',k')$ of problem $\Pi_{2}$ is computed in time bounded by $f(k)\cdot\Card{I}^{O(1)}$, with $k' \leq g(k)$ for some computable function $g$ and $(I,k)$ is a \Yes{}-instance if and only if $(I',k')$ is a \Yes{}-instance.

For more detail on the theory of parameterized complexity we refer the reader to \cite{FlumGrohe06}, \cite{Niedermeier06} and \cite{DowneyFellows99}.

\subsection{Logic, Relational Structures and the Model Checking Problem}~\label{subsec:prelims_logics}

For a more detailed treatment of logic and finite model theory see the monographs of Ebbinghaus \& Flum~\cite{EbbinghausFlum95} and Ebbinghaus, Flum \& Thomas~\cite{EbbinhausFlumThomas94}.

A \emph{relational structure} $\mathcal{A}$ consists of a set $A$ of elements called the \emph{universe} and a set $\tau$ of relations of finite arity called the \emph{vocabulary}, along with an interpretation of each relation in $\tau$ over $A$. That is, for each relation $R \in \tau$ of arity $r$ there is a set of tuples from $A^{r}$ that define when $R$ is true. It is sufficient for our purposes to consider only structures where $A$ is finite and $\tau$ is finite and non-empty.

\emph{First order (FO) logic} consists of a countably infinite set of \emph{(individual) variables} and a \emph{vocabulary} $\tau$, which we will assume matches the vocabulary of the structure the logic is being applied to. Formul\ae{} of first order logic are constructed from the variables, vocabulary, the atomic relation $=$, the boolean connectives $\vee$, $\wedge$ and $\neg$ and the quantifiers $\forall$ and $\exists$. A formula is a \emph{sentence} if there are no free variables, that is, all variables are bound by a quantifier.

\emph{Second order (SO) logic} extends first order logic by including a countably infinite set of \emph{relation variables}, but construction of formul\ae{} is otherwise the same. If all relation variables are unary, then the logic is \emph{monadic (MSO)}. Note that the key difference between first and second order logic is not particularly that there are relation variables, but that second order logic allows quantification over them.

The precise semantics of applying a logic to a structure are unnecessary to discuss in detail here, suffice to say that (relation) variables of the formula are mapped to (sets of) elements of the universe and the formula is evaluated via the satisfaction of the model under the vocabulary of the structure.

Given a structure $\mathcal{A}$ and a logical formula $\phi$ the basic \textsc{Model Checking} problem is simply checking whether the formula $\phi$ holds on $\mathcal{A}$:

\classicalproblem{Model Checking (MC)}{A structure $\mathcal{A}$, and a formula $\phi$.}{$\mathcal{A}\models\phi$?}

By \textsc{MC}(C,$\Phi$) we denote the model checking problem restricted to structures in the class $C$ and formual\ae{} in the class $\Phi$.
There are two particularly interesting cases for the \textsc{Model Checking} problem, summarised by the following theorems.

\begin{theorem}[Courcelle~\cite{Courcelle06}~--~\cite{Courcelle90}]
Let $C$ be any class of structures of bounded treewidth. \textsc{MC}($C$,MSO) is fixed-parameter tractable when parameterized by the treewidth of the structure and the length of the MSO sentence.
\end{theorem}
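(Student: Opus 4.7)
The plan is to reduce the model checking problem to the membership problem for a finite tree automaton running on a labelled tree of bounded alphabet, and then solve the latter in linear time. Concretely, given a structure $\mathcal{A}$ of treewidth at most $k$ and an MSO sentence $\phi$, I would proceed in four stages: (i) compute a tree decomposition of $\mathcal{A}$ of width $\leq k$; (ii) refine it into a nice tree decomposition; (iii) encode the nice decomposition as a rooted labelled tree over an alphabet whose size depends only on $k$ and the vocabulary; and (iv) construct, by induction on $\phi$, a finite bottom-up tree automaton $A_{\phi}$ over this alphabet that accepts exactly those labelled trees whose underlying structure satisfies $\phi$. Running $A_{\phi}$ on the encoded tree takes time linear in $|V(\mathcal{T})|$ with a constant factor depending on $|A_{\phi}|$.

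For step (i), Bodlaender's algorithm produces a width-$k$ decomposition in time $f(k)\cdot n$, so the treewidth assumption is used constructively. For step (ii), the standard transformation yields an equivalent decomposition of the same width with $O(n)$ nodes, each of one of four types (leaf, introduce, forget, join). For step (iii), since each bag has at most $k+1$ elements, the isomorphism type of the induced substructure on a bag together with the ``transition'' relative to the parent bag (which element is introduced or forgotten, or that two children are joined) can be coded by a symbol from a finite alphabet $\Sigma_{k,\tau}$. This reduces $\mathcal{A}$ to a $\Sigma_{k,\tau}$-labelled tree $\mathcal{T}_{\mathcal{A}}$ such that $\mathcal{A}$ is recoverable from $\mathcal{T}_{\mathcal{A}}$ up to isomorphism.

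The main obstacle is step (iv): building $A_{\phi}$ by structural induction on MSO formulae. The state of the automaton at a node $t$ must summarise, for the subtree rooted at $t$, enough information about the partial assignment to free (element and set) variables to determine satisfaction of $\phi$ once the computation reaches the root. I would handle atomic formulae directly using the encoded bag contents, closure under $\wedge$, $\vee$, $\neg$ via product and complement of tree automata (using determinisation for negation), and existential quantification $\exists X\,\psi$ by projecting away the component of the alphabet that tracks $X$; equivalently, one guesses the membership of each bag element in $X$ and propagates consistency along the connectivity requirement of tree decompositions. First-order quantifiers are the singleton case of monadic quantifiers. The resulting automaton has size bounded by a function $g(|\phi|,k,|\tau|)$, which is non-elementary in $|\phi|$ because of repeated determinisations, but crucially this function is independent of $n$.

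Combining the stages, the total running time is $g(|\phi|,k,|\tau|)\cdot n + f(k)\cdot n$, which is fixed-parameter tractable with combined parameter $|\phi|+k$. The chief technical work lies in carefully defining the alphabet and state space so that determinisation at the negation step preserves the invariant that states encode exactly the MSO-definable ``type'' of the partial assignment on the subtree; once that invariant is set up, the induction is mechanical, and the linear-time run of $A_{\phi}$ on $\mathcal{T}_{\mathcal{A}}$ delivers the claimed bound.
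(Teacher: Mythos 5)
This theorem is stated in the paper as a cited result (Courcelle~\cite{Courcelle90}--\cite{Courcelle06}) and is not proved there, so there is no in-paper proof to compare against. Your outline is the standard automata-theoretic proof of Courcelle's Theorem --- Bodlaender's algorithm, a nice tree decomposition, encoding as a tree over an alphabet depending only on $k$ and $\tau$, and an automaton built by induction on $\phi$ with projection for existential set quantifiers and determinisation for negation --- and it is correct at the level of detail given; the only real work hidden in your step (iv) is the consistency bookkeeping for elements that occur in many bags, which you correctly identify as resting on the connectivity condition of tree decompositions.
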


\begin{theorem}[Frick \& Grohe~\cite{FrickGrohe01}]
Let $C$ be any class of structures of bounded local treewidth. \textsc{MC}($C$,FO) is fixed-parameter tractable when parameterized by the length of the FO sentence.
\end{theorem}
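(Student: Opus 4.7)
The plan is to follow the standard route via Gaifman's locality theorem, reducing model checking of an arbitrary FO sentence to the local (bounded treewidth) setting where Courcelle's theorem is available. The first step is to invoke Gaifman's theorem to rewrite the input sentence $\phi$ as an equivalent Boolean combination of \emph{basic local sentences}, i.e.\ sentences of the form
\[
\exists x_{1}\dots\exists x_{k}\Bigl(\bigwedge_{i\neq j} d(x_{i},x_{j})>2r \;\wedge\; \bigwedge_{i=1}^{k}\psi^{(r)}(x_{i})\Bigr),
\]
where $\psi^{(r)}(x)$ is $r$-local, meaning that all quantifiers in $\psi$ are relativised to the $r$-neighborhood of the free variable $x$. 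The radius $r$, the number of scattered witnesses $k$, and the formulas $\psi^{(r)}$ all depend only on $\phi$, so this rewriting step costs time bounded solely by a function $f_{1}(\Card{\phi})$.

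The second step handles the local ingredients. For each element $v$ of the universe $A$ and each relevant local formula $\psi^{(r)}$, I need to decide whether $\psi^{(r)}(v)$ holds in the substructure induced by $N_{r}^{G}(v)$. By the bounded local treewidth hypothesis, this induced substructure has treewidth at most $g(r)$ for some function $g$ depending only on the class $C$, and can be extracted in time polynomial in $\Card{A}$. Since every FO formula is an MSO formula, Courcelle's theorem (the previous theorem in the excerpt) applied to each $N_{r}^{G}(v)$ evaluates $\psi^{(r)}(v)$ in time $f_{2}(\Card{\phi})\cdot\Card{N_{r}^{G}(v)}^{O(1)}$. Doing this once for every $v\in A$ and each of the finitely many local formulas produced in step one labels every element of $A$ with its $r$-local FO type relative to $\phi$, in total time $f_{3}(\Card{\phi})\cdot\Card{A}^{O(1)}$.

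The third step is to decide each basic local sentence given the type labels. Here the task becomes combinatorial: find $k$ elements carrying the prescribed local type whose pairwise distances exceed $2r$. The Boolean combination of answers to these basic sentences then decides $\phi$ itself. To stay within $\FPT$, I would follow Frick and Grohe's scattered-set enumeration: greedily build the $k$ witnesses type by type, and whenever a candidate vertex is rejected because it lies within distance $2r$ of a previously selected element, charge the failure to a bounded-size ball around an earlier element, using that each $2r$-ball has size bounded in terms of $r$ and the class parameters to keep the branching factor a function of $\Card{\phi}$ alone.

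The main obstacle is precisely this third step. The first two steps are essentially mechanical combinations of known results, but ensuring that scattered-witness selection runs in $\FPT$ rather than in $\Card{A}^{k}$ requires exploiting the sparsity implicit in bounded local treewidth: without a careful accounting of how many candidates can interfere with each other within distance $2r$, the naive enumeration collapses to XP. Getting the bookkeeping right --- and confirming it is robust under the relational-structure/Gaifman-graph translation mentioned in the preliminaries --- is where the bulk of the technical work lies.
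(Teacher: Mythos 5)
This theorem is quoted by the paper from Frick \& Grohe~\cite{FrickGrohe01}; the paper gives no proof of it, so there is no internal proof to compare against. Your outline does track the published argument: Gaifman's locality theorem, evaluation of the $r$-local formulas on $r$-neighbourhoods (which have bounded treewidth by hypothesis) via a Courcelle-style model checker, and then the combinatorial step of finding $k$ pairwise $2r$-scattered witnesses of the prescribed local type. Your first two steps are essentially fine as stated, modulo the usual caveats that one needs \emph{effectively} bounded local treewidth to obtain a uniform algorithm, and that Courcelle's theorem must be applied to a formula with one free variable, e.g.\ by marking $v$ with a fresh unary predicate.

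The genuine gap is in your third step. Your charging argument rests on the claim that ``each $2r$-ball has size bounded in terms of $r$ and the class parameters.'' That is false for classes of bounded local treewidth: a planar graph, or even a star $K_{1,n-1}$, has bounded local treewidth, yet the $1$-ball of a high-degree vertex contains the entire vertex set. So the branching factor of your greedy enumeration is not a function of $\Card{\phi}$, and as written the step degenerates to examining all $\binom{n}{k}$ tuples, i.e.\ XP rather than \FPT{}. The standard repair --- and what Frick \& Grohe actually do --- is different: greedily extract a maximal $2r$-scattered subset $B$ of the set $P$ of elements satisfying $\psi^{(r)}$; if $\Card{B}\geq k$ the basic local sentence holds, and otherwise $P\subseteq\bigcup_{b\in B}N^{G}_{2r}(b)$ with $\Card{B}<k$, so the entire scattered-witness question is confined to a union of fewer than $k$ balls of radius $O(r)$. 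One then needs a separate lemma, with its own nontrivial proof, asserting that in a class of bounded local treewidth the union of $\ell$ balls of radius $s$ has treewidth bounded by a function of $\ell$ and $s$ alone; with that in hand a final application of Courcelle's theorem (or direct dynamic programming on a tree decomposition of the union) finishes the proof. Without that lemma or some substitute for it, your step three does not go through.
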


Note that in both cases the length of the formula and the treewidth may themselves be dependent on other parameters.

A key tool in actually computating the solution to the model checking problem for many classes of structures is the \emph{Gaifman graph} of the structure. Given a strucure $\mathcal{A}$, the Gaifman graph $\mathcal{G^{A}}$ is the graph obtained by taking the universe $A$ of $\mathcal{A}$ as the vertex set of $\mathcal{G^{A}}$, with an edge between two vertices if they ever appear in the interpretation of any relation in $\tau$ together. Taking a graph as a structure with binary relation $E$ (the edge relation), the Gaifman graph is precisely the graph.

\section{Dynamic Graphs as Logical Structures}

Let $G$ be a dynamic graph. We give two translations of $G$ into a logical structure, with different properties and limitations appropriate to different applications.

\subsection{Local Treewidth Preserving Structure}

Let $G=(V,E,T,\zeta_{v},\zeta_{e})$ be a dynamic graph.

Let $\mathcal{A}(G)$ be the structure obtained from $G$ with universe $A$ and vocabulary $\tau$ where $A$ consists of
\begin{itemize}
\item an element $v^{t}$ for each $v \in V(G)$ if $t \in \zeta_{v}(v)$, with a function $f_{V}$ such that $f_{V}(v^{t}) = f_{V}(u^{t'})$ if and only if $v^{t}$ and $u^{t}$ are derived from the same vertex $v \in V(G)$,
\item an element for each $t \in T(G)$,
\end{itemize}
and $\tau = \SB V^{\mathcal{A}}, E^{\mathcal{A}}, T^{\mathcal{A}}, R^{\mathcal{A}}, L_{t}^{\mathcal{A}} \SE$ for all $t \in T$ where
\begin{itemize}
\item $v \in V^{\mathcal{A}(G)} $ if and only if $v \in V(G)$,
\item $uv \in E^{\mathcal{A}(G)} $ if and only if $uv \in E(G)$,
\item $t \in T^{\mathcal{A}(G)} $ if and only if $t \in T(G)$,
\item $(t_{1},t_{2}) \in R^{\mathcal{A}(G)}$ if and only if $t_{1},t_{2} \in T(G)$ and $t_{1} <_{T} t_{2}$,
\item $v \in L_{t}^{\mathcal{A}(G)}$ if and only if $t \in \zeta_{v}(v)$, and
\item $uv \in L_{t}^{\mathcal{A}(G)}$ if and only if $t \in \zeta_{e}(uv)$.
\end{itemize}

When the structure is understood from context, we will drop the $\mathcal{A}$ superscript.

\begin{theorem}
Given a dynamic graph $G$, if $G_{t}$ has local treewidth (effectively) bounded by $f(r)$ at each time $t \in T(G)$, then the Gaifman graph $\mathcal{G}$ obtained from $\mathcal{A}(G)$ has local treewidth (effectively) bounded by $\max\{f(r),\Card{T(G)}-1\}$.
\end{theorem}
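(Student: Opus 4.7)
The plan is to read off the edges of the Gaifman graph $\mathcal{G}$ of $\mathcal{A}(G)$ directly from the vocabulary $\tau$ and then to verify that $\mathcal{G}$ breaks up into well-controlled components. Each of $V^{\mathcal{A}}$, $T^{\mathcal{A}}$ and the family $L_t^{\mathcal{A}}$ is unary and so contributes no Gaifman edges. The binary relation $E^{\mathcal{A}}$, interpreted on the vertex-copies, contributes precisely the edge $u^t v^t$ when $uv \in E(G_t)$, so the vertex-copies at time $t$ induce a subgraph of $\mathcal{G}$ isomorphic to $G_t$. The binary relation $R^{\mathcal{A}}$ lives entirely on the time elements; at worst (when $<_T$ is total and transitive) it makes them into a clique. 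Nothing in $\tau$ connects a vertex-copy to a time element, nor connects $v^t$ to $v^{t'}$ for $t \ne t'$: the map $f_V$ is external bookkeeping, not a symbol in $\tau$. Hence $\mathcal{G}$ is the disjoint union of the snapshot graphs $\{G_t\}_{t \in T(G)}$, realised on disjoint vertex-copy sets, together with one additional component on the at most $\Card{T(G)}$ time elements.

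Given this decomposition, the local treewidth analysis splits into two natural cases. If $x = v^t$ is a vertex-copy, then $N_r^{\mathcal{G}}(v^t)$ remains inside the $t$-slice and is isomorphic to $N_r^{G_t}(v)$; by hypothesis its treewidth is at most $f(r)$. If $x$ is a time element, then $N_r^{\mathcal{G}}(x)$ is contained in the time-element component, a graph on at most $\Card{T(G)}$ vertices whose treewidth is trivially bounded by $\Card{T(G)} - 1$ (a single bag holding all time elements suffices). Taking the maximum over the two cases yields $\ltw(\mathcal{G}, r) \le \max\{f(r), \Card{T(G)} - 1\}$. The effective version follows immediately: the bound is computable whenever $f$ is, since $\Card{T(G)}$ is part of the input.

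The main obstacle is not the treewidth calculation but the structural observation in the first paragraph: one must carefully check that none of the relations in $\tau$ — in particular $E^{\mathcal{A}}$ and the edge-labels $L_t^{\mathcal{A}}$ — silently introduces Gaifman edges between vertex-copies at different times or between vertex-copies and time elements. The fact that $f_V$ is specified as an indexing function outside the vocabulary is exactly what rules out any inter-temporal edges. Once this disjointness is established, the decomposition is clean and the bound takes the form $\max\{f(r), \Card{T(G)} - 1\}$ rather than a combination of the two quantities; the rest of the argument is then a one-line case analysis on the type of $x$.
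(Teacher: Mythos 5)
Your proof is correct and follows essentially the same route as the paper's: decompose the Gaifman graph into the pairwise disjoint snapshot components $\mathcal{G}_{t}\cong G_{t}$ plus the (at worst) clique on the time elements, bound the treewidth of each separately, and take the maximum of $f(r)$ and $\Card{T(G)}-1$. You are in fact more explicit than the paper about why no relation in $\tau$ introduces inter-slice or vertex-to-time Gaifman edges and about the case analysis on where the centre of the $r$-neighbourhood lies.
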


\begin{proof}
Let $G_{t}$ be the graph at time $t\in T(G)$ and $\mathcal{G}_{t}$ the corresponding component of $\mathcal{G}$. $\mathcal{G}_{t}$ is essentially $G_{t}$.

As $G_{t}$ has (effectively) bounded local treewidth, there is a function $f$ such that for each $v\in V$ and $r\in \mathbb{N}$ we have $\tw([N_{r}(v)]) \leq f(r)$ in $G_{t}$ (note that this may not be true if we consider $G$ in total, ignoring the timing of the edges). In particular for each $v$ and $r$ in each $G_{t}$, we have a tree decomposition where each bag has at most $f(r)+1$ elements. Moreover we have such a decomposition at each time $t$, with no interaction between any two snapshots. Therefore the width of the decomposition is at most $f(r)$.

The only component of $\mathcal{G}$ that does not have an appropriate decomposition is the clique defined by the time elements representing $T(G)$. The only possibly decomposition is to place all elements in one bag, with width $\Card{T(G)}-1$.

As each $\mathcal{G}_{t}$ and the clique are disjoint, there are no additional edges to consider in the possible decompositions.
\end{proof}

We note also that if the graph has bounded degree at every $t \in T(G)$, this property is also preserved in the Gaifman graph. With the construction of Section~\ref{sec:ordered_time_struct}, planarity can also be preserved.

\subsection{Treewidth Preserving Structure}

If we are only concerned with the treewidth of the structure, we can use somewhat more natural structure.

Let $G=(V,E,T,\zeta_{v},\zeta_{e})$ be a dynamic graph.

In this case the universe $A$ consists of
\begin{itemize}
\item an element $v^{t}$ for each $v \in (V)$ if $t \in \zeta_{v}(v)$.
\item an element for each $t \in T(G)$,
\end{itemize}
and $\tau = \SB V^{\mathcal{A}}, L_{v}^{\mathcal{A}}, \varXi^{\mathcal{A}}, T^{\mathcal{A}}, R^{\mathcal{A}} \SE$ where
\begin{itemize}
\item $v^{t} \in V^{\mathcal{A}(G)}$ if and only if $v \in V(G)$ and $t \in \zeta_{v}(v)$,
\item $v^{t} \in L_{v}^{\mathcal{A}(G)}$ if and only if $v^{t}$ is generated from $v$,
\item $(u,v,t) \in \varXi^{\mathcal{A}(G)}$ if and only if $uv \in E(G)$ and $t \in \zeta_{e}(uv)$,
\item $t \in T^{\mathcal{A}(G)}$ if and only if $t \in T(G)$, and
\item $(t_{1},t_{2}) \in R^{\mathcal{A}(G)}$ if and only if $t_{1},t_{2} \in T(G)$ and $t_{1} <_{T} t_{2}$. 
\end{itemize}

\begin{theorem}
If $G$ has $\tw(G) \leq k$ at every time $t \in T(G)$ then the Gaifman graph $\mathcal{G}$ obtained from $\mathcal{A}(G)$ has $\tw(\mathcal{G}) \leq \max\{k+1,\Card{T(G)}-1\}$.
\end{theorem}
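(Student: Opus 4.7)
The plan is to construct a tree decomposition of $\mathcal{G}$ by gluing together decompositions of each snapshot via a single ``time'' bag. First I would enumerate the edges of the Gaifman graph: the only relations that contribute edges are $\varXi$, which for each tuple $(u,v,t)$ puts a triangle on $\{u^t, v^t, t\}$, and $R$, which produces a subgraph on $T(G)$ that is contained in the complete graph $K_{|T(G)|}$. The unary relations $V$, $T$, and the $L_v$ contribute nothing.

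Given this, I would take, for each $t \in T(G)$, an optimal tree decomposition $\mathbb{T}_t$ of $G_t$ with width at most $k$, relabel every occurrence of a vertex $v$ in a bag by the element $v^t$, and then add the time element $t$ to every bag of $\mathbb{T}_t$. This augmentation raises the width of $\mathbb{T}_t$ by exactly one, to at most $k+1$. Separately, create a single root bag $B_\star = T(G)$ of width $|T(G)|-1$, and attach each (augmented) $\mathbb{T}_t$ below $B_\star$ by picking any bag of $\mathbb{T}_t$ as a child of $B_\star$; every such bag already contains $t$.

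Next I would verify the three tree-decomposition axioms. Coverage of elements is immediate: each $v^t$ lies in $\mathbb{T}_t$ (since $v \in V(G_t)$), and each time element lies in $B_\star$. Coverage of Gaifman edges splits into two cases: for a $\varXi$-induced triangle $\{u^t,v^t,t\}$, the edge $uv$ of $G_t$ appears in some bag of $\mathbb{T}_t$, and after relabelling and adding $t$ all three elements sit in that bag; for an $R$-induced edge between two time elements, both elements sit in $B_\star$. For connectivity, each $v^t$ appears only within the subtree $\mathbb{T}_t$, where the connected-subgraph property is inherited from the original decomposition of $G_t$; for a time element $t$, the set of bags containing $t$ consists of $B_\star$ together with every bag of $\mathbb{T}_t$, and since $\mathbb{T}_t$ is attached to $B_\star$ by a $t$-containing bag, the induced subtree is connected. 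The width is then $\max\{k+1,\,|T(G)|-1\}$, as required.

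The one subtle point — and the step I would expect to need the most care — is the treatment of the ternary relation $\varXi$. Because each edge of a snapshot $G_t$ is encoded as a triple including the time element, the Gaifman graph turns each snapshot edge into a triangle rather than just inheriting $G_t$. Consequently $t$ must be placed in \emph{every} bag of $\mathbb{T}_t$, not just at the root of attachment, to guarantee the Gaifman edges $u^t t$ and $v^t t$ are covered. This is precisely what produces the ``$+1$'' in $k+1$ and distinguishes this construction from a naïve disjoint union of snapshot decompositions.
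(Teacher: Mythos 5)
Your construction is exactly the one the paper uses: take a width-$k$ decomposition of each snapshot $G_t$, add the time element $t$ to every bag, collect all of $T(G)$ into a single bag, and attach each snapshot decomposition to that bag, giving width $\max\{k+1,\Card{T(G)}-1\}$. Your version is in fact slightly more careful than the paper's, since you explicitly enumerate the Gaifman edges arising from the ternary relation $\varXi$ and verify the three decomposition axioms, whereas the paper asserts the structure of $\mathcal{G}$ and the correctness of the glued decomposition without spelling these checks out.
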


\begin{proof}
Let $G_{t}$ be the snapshot of $G$ at time $t\in T(G)$ and let $\tw(G_{t}) \leq k$ for every $t \in T(G)$. The Gaifman graph $\mathcal{G}$ consists of $\Card{T(G)}+1$ components; a component $\mathcal{G}_{t}$ for each $G_{t}$ which is the normal Gaifman graph translation of a graph, i.e. the graph itself, and a $(\Card{T(G)})$-clique  corresponding to the time elements of $A$. Each element of the time clique is connected to every vertex in exactly one $\mathcal{G}_{t}$ (precisely the element corresponding to time $t \in T(G)$) and each vertex in each $\mathcal{G}_{t}$ is connected to exactly one vertex of the time clique.

By assumption for each $\mathcal{G}_{t}$ we have $\tw(\mathcal{G}_{t}) \leq k$. Thus each $\mathcal{G}_{t}$ has a tree decomposition where each bag has at most $k+1$ elements. To each of these bags we add the element corresponding to time $t \in T(G)$. The time clique forms a bag of size $\Card{T(G)}$. We add an edge to the tree decomposition between the time clique bag and an arbitrarily chosen bag from each decomposition of each $\mathcal{G}_{t}$. Thus we have a tree decomposition of $\mathcal{G}$ with bag size at most $\max\{k+2,\Card{T(G)}\}$ and therefore $\tw(\mathcal{G}) \leq \max\{k+1,\Card{T(G)}-1\}$.
\end{proof}

\subsection{Structures for Totally Ordered Time}\label{sec:ordered_time_struct}

If we can assume that time is linear for our dynamic graph $G$, i.e. there exists a total order on the elements of $T(G)$, we may construct a structure where the Gaifman graph avoids the clique created by the elements corresponding to the times. The trade-off is that the contruction of logical sentences becomes more involved in the sense that the sentences become longer.

To obtain this modified structure we restrict the relation $R$ in each $\tau$ such that $(t_{1},t_{2}) \in R$ if and only if $t_{1}$ immediately precedes $t_{2}$. Furthermore we add a constant $s$ such that $(s,t_{i}) \in R$ if and only if $t_{i}$ is the earliest time element. Then the component of the Gaifman graph constructed from $\mathcal{A(G)}$ corresponding to $R$ is a path, where each $t_{i}$ subtends a vertex, and there is an edge between two such vertices if and only if one immediately precedes the other temporally. Moreover we can define the distance between each time element and $s$ recursively:
\begin{eqnarray*}
d_{1}(t_{i}) &:=& Rst_{i}\\
d_{n}(t_{i}) &:=& \exists t_{j} (Tt_{j} \wedge Rt_{i}t_{j} \wedge d_{n-1}(t_{j})) 
\end{eqnarray*}
Note of course that as first order logic is compact, these formul\ae{} must be defined separately for each $\Card{T(G)}$, giving a finite set in each case. As $\Card{T(G)}$ will always be taken as a parameter we effectively produce an infinite family of such formul\ae{}, and select the appropriate subset for each instance of whichever problem we deal with. In the case of second order logic this problem does not exist, but it is convenient to use the first order construction.
Then within the logic we can define the order relation over the time elements as:
\begin{eqnarray*}
t_{i} \leq t_{j} & := & (t_{i} = t_{j}) \vee \bigvee_{l \in [1,\Card{T(G)}]} (d_{l}(t_{i}) \wedge \neg d_{l}(t_{j}))
\end{eqnarray*}

The operator $\leq$ allows comparison over all times, but at the cost of formulae which may depend on the size of $T(G)$. However there is a notable change in treewidth and local treewidth of the structures using this approach.

\begin{theorem}
Given a dynamic graph $G$, let $\mathcal{A}(G)$ be the structure constructed from $G$ as a Local Treewidth Preserving Structure, except with a linear time relation $R$.

If $G$ has local treewidth (effectively) bounded by $f(r)$ at every $t\in T(G)$, then the Gaifman graph $\mathcal{G}$ derived from $\mathcal{A}(G)$ has local treewidth (effectively) bounded by $f(r)$.
\end{theorem}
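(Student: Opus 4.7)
The plan is to follow the template of the previous local-treewidth theorem and exploit the observation that the only change introduced by the linear time relation is that the time clique of the earlier construction is replaced by a much thinner object. First I would identify precisely which Gaifman edges appear in $\mathcal{G}$. As in the unordered case, distinct time snapshots never share an element (the elements $v^{t}$ and $v^{t'}$ for $t \neq t'$ are formally distinct), and no relation symbol in $\tau$ contains a tuple that mixes elements from two different snapshots or joins a $v^{t}$ to a time element, so $\mathcal{G}$ still splits into one component $\mathcal{G}_{t}$ for each $t \in T(G)$ together with whatever edges arise from $R$ and the constant $s$. Since $R$ now only contains pairs of immediately consecutive time elements (together with $(s, t_{1})$), these time-related edges form a simple path.

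Next I would bound the local treewidth of each piece. By hypothesis, each snapshot component $\mathcal{G}_{t}$ is essentially $G_{t}$ and therefore has local treewidth (effectively) bounded by $f(r)$. For the time path, the $r$-neighborhood of any element is a subpath on at most $2r+1$ vertices, whose treewidth is at most $1$. Hence the local treewidth of the time path is at most $1$, which is in particular bounded by $f(r)$ for every nontrivial $r$ (the degenerate case $r = 0$ is immediate since $0$-neighborhoods are single vertices of treewidth $0$, and in any case $f$ may be assumed to dominate this constant).

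Finally, because the components of $\mathcal{G}$ are pairwise disjoint, the $r$-neighborhood of any element lies entirely inside a single component, so $\ltw(\mathcal{G}, r)$ is simply the maximum of the local treewidths of the components. Combining the two bounds yields $\ltw(\mathcal{G}, r) \leq f(r)$, as required. The main point that needs care is the disjointness claim: I would go through each relation symbol $V, E, T, R, L_{t}$ in $\tau$ to verify that no tuple crosses between different snapshots and that no tuple joins a vertex or edge element to a time element (the predicates $L_{t}$ tag vertices and edges but do not pair them with the time element $t$). This structural observation is what lets us eliminate the $\Card{T(G)} - 1$ summand of the earlier theorem, since the offending clique on the time elements has been replaced by a path of uniformly bounded local treewidth.
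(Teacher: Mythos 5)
Your proof is correct and follows essentially the same route as the paper's: the only change from the unordered construction is that the time clique becomes a path, which has treewidth (hence local treewidth) at most $1$, and since the snapshot components and the time component are pairwise disjoint the local treewidth of $\mathcal{G}$ is the maximum over the components, giving $f(r)$. Your explicit verification of component disjointness and of the degenerate small-$r$ cases is slightly more careful than the paper's two-line argument, but it is the same proof.
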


\begin{proof}
The only change to the structure in this case is the component concerned with the time elements. In the first construction this was a clique of size $\Card{T(G)}$, but now is a path of length $\Card{T(G)}$. As trees have treewidth $1$, and local treewidth is bounded by treewidth, the local treewidth of this component is at most $1$.
\end{proof}

\begin{theorem}
Given a dynamic graph $G$, let $\mathcal{A}(G)$ be the structure constructed from $G$ as Treewidth Preserving Structure, except with a linear time relation $R$.

If $G$ has $\tw(G)\leq k$ at every $t\in T(G)$, then the Gaifman graph $\mathcal{G}$ derived from $\mathcal{A}(G)$ has $\tw(\mathcal{G}) \leq k+1$.
\end{theorem}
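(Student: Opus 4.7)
The plan is to replicate the strategy of the preceding treewidth-preserving theorem, but to replace the ``big time bag'' with a path of small bags, since the linear relation $R$ now induces a path on time elements rather than a clique. First I would describe the Gaifman graph $\mathcal{G}$ explicitly: it consists of a copy $\mathcal{G}_{t}$ of $G_{t}$ for each time $t \in T(G)$ (the vertex--vertex edges come from the ternary relation $\varXi$, projected onto its first two coordinates), the path $s, t_{1}, t_{2}, \ldots, t_{m}$ on the time elements induced by $R$, and, for each $t$, the ``hub'' edges from the element $t$ to every $v^{t} \in V(\mathcal{G}_{t})$ incident to some edge of $G_{t}$ (these too arise from $\varXi$). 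The unary relations $V, L_{v}, T$ contribute nothing to the Gaifman graph.

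Next I would construct the decomposition. For each $t \in T(G)$, let $\mathbb{T}_{t}$ be a tree decomposition of $\mathcal{G}_{t}$ of width at most $k$, and insert the single element $t$ into every bag of $\mathbb{T}_{t}$; this raises the bag size by one, giving width at most $k+1$, and it simultaneously covers every Gaifman edge of the form $\{t, v^{t}\}$, since each $v^{t}$ occurs in at least one bag of $\mathbb{T}_{t}$. To handle the time path, I would introduce a spine of ``link'' bags $B_{1}, B_{2}, \ldots, B_{m}$, where $B_{i} = \{t_{i-1}, t_{i}\}$ with the convention $t_{0} = s$, joined consecutively by tree edges. Each link bag has size $2$, and together they cover every edge of the time path. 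Finally, I would attach the root of each $\mathbb{T}_{t_{i}}$ to $B_{i}$ by a single tree edge.

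Before concluding, I would verify the three tree-decomposition axioms. Every element of the universe appears somewhere: vertex elements appear in $\mathbb{T}_{t}$ for the appropriate $t$, and time elements appear in the spine (and in the associated $\mathbb{T}_{t}$). Every Gaifman edge is covered, as noted above. For connectivity, a vertex element $v^{t}$ appears only in bags of $\mathbb{T}_{t}$, which form a subtree by assumption; a time element $t_{i}$ appears in $B_{i}$, in $B_{i+1}$ (if it exists), and in every bag of $\mathbb{T}_{t_{i}}$, and these bags are connected through $B_{i}$ to which $\mathbb{T}_{t_{i}}$ is attached. The maximum bag size is $\max\{k+2, 2\} = k+2$, giving width at most $k+1$.

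I do not anticipate a genuine obstacle here; the construction is a direct modification of the previous proof and the main care is simply to be explicit that the $\varXi$ relation produces both edges of $G_{t}$ and incidences between $t$ and the vertices $v^{t}$, so that augmenting every bag of $\mathbb{T}_{t}$ with the single element $t$ suffices to cover all time--vertex Gaifman edges without requiring a larger time bag. The only minor subtlety is whether $s$ should be treated as an additional element of the universe; either way it is absorbed into the spine bag $B_{1}$ at no extra cost to the width.
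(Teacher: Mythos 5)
Your proposal is correct and follows essentially the same approach as the paper: add the time element $t$ to every bag of a width-$k$ decomposition of $\mathcal{G}_{t}$, decompose the time path with width-$1$ bags, and attach each $\mathbb{T}_{t}$ to the spine at the bag containing $t$. The only (immaterial) difference is that you use edge bags $\{t_{i-1},t_{i}\}$ alone for the spine where the paper also keeps singleton vertex bags $\{t\}$; your explicit accounting of the $t$--$v^{t}$ Gaifman edges arising from $\varXi$ is a welcome bit of extra care.
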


\begin{proof}
As before each component corresponding to the graph $G_{t}$ at time $t$ has treewidth at most $k$, and as before we add $t$ to each bag of this decomposition. For the time component we construct the following tree decomposition:

\begin{enumerate}
\item Each vertex $t$ is given a bag labelled $\{t\}$,
\item Each edge $st$ is given a bag labelled $\{s,t\}$ and
\item each vertex bag is connected in the decomposition to the two edge bags that contain the same label.
\end{enumerate}

This component clearly has treewidth $1$. Then we complete the decomposition for $\mathcal{G}$ by adding a decomposition edge from the vertex bag for time element $t$ to an arbitrary bag in the decomposition of $G_{t}$. Clearly this is still a tree, and the whole decomposition has width at most $k+1$.
\end{proof}

Note that for both representations of the temporal element of the graph, the complexity of the problem is at least partially dependent on $\Card{T(G)}$, in that either the (local) treewidth depends on $\Card{T(G)}$, or the length of any formula where we have to compare times does.

\section{Applications to Dynamic Graph Problems}

\subsection{Adapting the Metatheorems to the Dynamic Context}

\begin{theorem}\label{thm:tw_MC}
Let $G$ be a dynamic graph with $\tw(G) \leq k$ at every time $t$ and $\phi$ a sentence of monadic second order logic. The problem \textsc{MC($G$,$\phi$)} is fixed-parameter tractable with parameter $k+\Card{\phi}+\Card{T(G)}$.
\end{theorem}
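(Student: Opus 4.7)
The plan is to reduce the dynamic model checking problem to the classical Courcelle theorem stated earlier in the excerpt, using the Treewidth Preserving Structure from Section 3.2 as the bridge. Because (local) treewidth of a relational structure is defined as the (local) treewidth of its Gaifman graph, and because $\phi$ will be interpreted over $\mathcal{A}(G)$ in the natural way, it suffices to bound the treewidth of that Gaifman graph and then invoke Courcelle.

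First I would make explicit the translation: given the input dynamic graph $G = (V,E,T,\zeta_v,\zeta_e)$, construct the Treewidth Preserving Structure $\mathcal{A}(G)$ over the vocabulary $\{V, L_v, \varXi, T, R\}$ in time polynomial in $|G|$. Interpret the given MSO sentence $\phi$ (which in any meaningful instance of \textsc{MC($G$,$\phi$)} is already written in this vocabulary, or a canonical dynamic-graph vocabulary that embeds into it) over $\mathcal{A}(G)$; the semantics of ``$G \models \phi$'' for the dynamic setting is exactly ``$\mathcal{A}(G) \models \phi$''. This reduction preserves the instance size up to a polynomial factor and does not change $|\phi|$.

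Next I would apply the preceding treewidth-preservation theorem to conclude that the Gaifman graph $\mathcal{G}$ of $\mathcal{A}(G)$ satisfies $\tw(\mathcal{G}) \leq \max\{k+1, |T(G)|-1\}$, which is a computable function of the two parameters $k$ and $|T(G)|$. Because a tree decomposition of $\mathcal{G}$ realising this bound can be assembled from the individual decompositions of the snapshots $G_t$ (each of width at most $k$, augmented by the time element) together with the single bag for the time clique, it can be produced in FPT time with parameter $k + |T(G)|$; alternatively, since the width is bounded by a function of the parameters, Bodlaender's algorithm computes a decomposition in FPT time. Then I would feed $(\mathcal{A}(G),\phi)$ together with this decomposition into the algorithm of Courcelle's theorem, yielding a running time of $g(\tw(\mathcal{G}) + |\phi|) \cdot |\mathcal{A}(G)|^{O(1)}$ for some computable $g$, which is bounded by $h(k + |T(G)| + |\phi|) \cdot |G|^{O(1)}$, the required FPT bound.

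The only real obstacle is conceptual rather than technical: one must be careful that the notion of ``dynamic graph model checking'' is legitimately captured by model checking $\mathcal{A}(G)$, i.e.\ that temporal quantification over vertices, edges, and times in $G$ translates faithfully into MSO over the new vocabulary. This is essentially transparent under the construction of Section 3.2 since $L_v$ witnesses which time-copies belong to the same original vertex, $\varXi$ encodes edges together with their time stamps, and $R$ encodes the temporal order, so every natural dynamic property expressible in MSO on $G$ is expressible as an MSO sentence on $\mathcal{A}(G)$ of the same (or polynomially related) length. Once this correspondence is noted, the rest is a direct invocation of the two prior theorems.
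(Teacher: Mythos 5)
Your proposal is correct and follows essentially the same route as the paper: translate $G$ into the Treewidth Preserving Structure, use the earlier theorem to bound the treewidth of its Gaifman graph by $\max\{k+1,\Card{T(G)}-1\}$ (a function of the parameters), and invoke Courcelle's theorem on the resulting structure. Your additional remarks on computing the decomposition and on the faithfulness of the translation are sensible elaborations of details the paper leaves implicit.
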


\begin{proof}
Courcelle's Theorem gives us that the model checking problem for graphs of bounded treewidth is fixed-parameter tractable with the treewidth and the length of the formula as a combined parameter.

In our case we may apply Courcelle's theorem to the Gaifman graph derived from the dynamic graph. As the treewidth of the Gaifman graph is bounded by the treewidth of the original graph and $\Card{T(G)}$, we obtain our result. Note that in the case where we use the linear time structure variant, $\Card{T(G)}$ is implicitly included in $\Card{\phi}$, if necessary.
\end{proof}

\begin{theorem}\label{thm:ltw_MC}
Let $G$ be a dynamic graph with effectively bounded local treewidth at every time $t$ and $\phi$ a sentence of first order logic. The problem \textsc{MC($G$,$\phi$)} is fixed-parameter tractable with parameter $\Card{\phi}+\Card{T(G)}$.
\end{theorem}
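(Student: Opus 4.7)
The plan is to mirror the proof of Theorem~\ref{thm:tw_MC}, but using Frick and Grohe's metatheorem in place of Courcelle's, applied to the Gaifman graph of the Local Treewidth Preserving Structure $\mathcal{A}(G)$ constructed in Section~3.1. First I would observe that by the earlier theorem on this construction, the Gaifman graph $\mathcal{G}$ of $\mathcal{A}(G)$ has local treewidth effectively bounded by $\max\{f(r),\Card{T(G)}-1\}$ whenever each snapshot $G_{t}$ has local treewidth effectively bounded by $f(r)$. Since $\Card{T(G)}$ is part of the combined parameter, this bound is controlled by the parameter, and model checking of $\phi$ on $G$ is equivalent to model checking of $\phi$ (possibly after trivial relabelling of relation symbols) on $\mathcal{G}$ equipped with the vocabulary of $\mathcal{A}(G)$.

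Next I would apply Frick and Grohe's theorem to $\mathcal{G}$. The mild subtlety here, and the main obstacle, is that Frick and Grohe's result as originally stated requires the local treewidth to be bounded by a fixed function independent of the instance, whereas our bound $\max\{f(r),\Card{T(G)}-1\}$ depends on $\Card{T(G)}$. I would resolve this by invoking Stewart's strengthening~\cite{Stewart08}, explicitly cited in the introduction, which permits the local treewidth bound to depend on a parameter of the input problem. Taking $\Card{T(G)}$ as part of the parameter legitimises the application and yields fixed-parameter tractability with combined parameter $\Card{\phi}+\Card{T(G)}$.

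As an alternative route that avoids Stewart's extension altogether, I would note that using the linear time variant of the structure from Section~\ref{sec:ordered_time_struct} gives a Gaifman graph whose local treewidth is bounded by $f(r)$ alone (since the time component collapses from a clique to a path of treewidth $1$). Frick and Grohe's theorem then applies directly. In this variant, any temporal comparison needed in $\phi$ must be expressed through the $d_{n}$ and $\leq$ formul\ae{} defined in Section~\ref{sec:ordered_time_struct}, whose total length grows with $\Card{T(G)}$; thus $\Card{T(G)}$ re-enters the parameter implicitly through $\Card{\phi}$, exactly as in the remark at the end of the proof of Theorem~\ref{thm:tw_MC}. Either pathway produces the claimed fixed-parameter tractability, and I would present whichever is cleaner, most likely stating the direct route via the Local Treewidth Preserving Structure plus Stewart, and noting the linear-time alternative parenthetically.
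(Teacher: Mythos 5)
Your proposal is correct and follows essentially the same route as the paper: apply Frick and Grohe's metatheorem to the Gaifman graph of the Local Treewidth Preserving Structure, whose local treewidth is bounded by $\max\{f(r),\Card{T(G)}-1\}$, and invoke Stewart's observation that the local treewidth bound may depend on the parameter. Your parenthetical alternative via the linear-time construction (pushing $\Card{T(G)}$ into $\Card{\phi}$) is also consistent with the remark the paper makes about that case.
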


\begin{proof}
By Frick and Grohe's Theorem, the model checking problem for graphs of effectively bounded local treewidth is fixed parameter tractable with the length of the formula as the parameter. Again the length of the formula may implicitly depend upon other parameters as part of the problem.

Stewart notes that Frick and Grohe's theorem still holds if the bound on the local treewidth depends upon a parameter, rather than simply being constant. In the case where we do not use the linear time construction, the local treewidth is bounded by $\max\{f(r),\Card{T(G)}-1\}$.
\end{proof}

We demonstrate the use of these theorems by application to some dynamic graph problems of general interest. Bhadra and Ferreira~\cite{BhadraFerreira03} prove that the problem of finding a connected component of size at least $k$ in an evolving digraph is $\NP{}$-complete. 

\pcproblem{Strongly Connected Dynamic Component (SCDC)}{A dynamic digraph $D=(V,A,T,\zeta)$, an integer $k$.}{$k+\Card{T}$.}{Is there a set $V'\subseteq V$ with $\Card{V'}\geq k$ such that for every pair $u,v \in V'$ we have $\mathcal{J}(u,v)$?}

Adapting the structures for directed graphs is simply a matter of semantics for the $E$ relation in $\tau$. Using this we apply the metatheorem to demonstrate that the problem is fixed-parameter tractable.

\begin{lemma}\label{lem:SCDC_exp}
\textsc{SCDC} is expressible in first order logic.
\end{lemma}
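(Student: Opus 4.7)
The plan is to exhibit an explicit first-order sentence $\phi_{\text{SCDC}}$, whose length depends only on the parameters $k$ and $\Card{T(D)}$, such that $\mathcal{A}(D) \models \phi_{\text{SCDC}}$ if and only if $D$ has a strongly connected dynamic component of size at least $k$.

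The core building block is a subformula $\mathcal{J}(x,y)$ expressing the existence of a journey from $x$ to $y$. For each $\ell \in \{1,\ldots,\Card{T(D)}\}$ I would write a formula $\psi_\ell(x,y)$ which existentially quantifies vertex-time elements $z_0,\ldots,z_\ell$ and time elements $s_1,\ldots,s_\ell$, imposes the strict temporal chain $R(s_1,s_2)\wedge\cdots\wedge R(s_{\ell-1},s_\ell)$ (so that $s_1 <_T s_2 <_T \cdots <_T s_\ell$), enforces $f_V(z_0)=f_V(x)$ and $f_V(z_\ell)=f_V(y)$ on the endpoints, and for each $i$ states that the arc $(z_{i-1},z_i)$ is present at time $s_i$. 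Because the $L_t$ are a family of unary relations indexed by concrete times, the phrase ``present at time $s_i$'' is realised by a finite disjunction over $t \in T(D)$, each disjunct hardcoding the relation $L_t$ and adding the equality $s_i=t$. The journey predicate is then $\mathcal{J}(x,y) := \bigvee_{\ell=1}^{\Card{T(D)}} \psi_\ell(x,y)$.

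Given $\mathcal{J}$, the SCDC sentence is straightforward:
\[
\phi_{\text{SCDC}} \;:=\; \exists x_1 \cdots \exists x_k \left( \bigwedge_{1 \leq i < j \leq k} f_V(x_i) \neq f_V(x_j) \;\wedge\; \bigwedge_{\substack{1 \leq i, j \leq k \\ i \neq j}} \mathcal{J}(x_i,x_j) \right).
\]
Its total length is polynomial in $k$ and $\Card{T(D)}$, hence a function of the parameters only, as required.

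The sole conceptual obstacle is bounding the length of a journey. Because a journey must use strictly increasing time labels, the number of arcs in any journey is at most $\Card{T(D)}-1$, and this bounded-length property is exactly what buys FO expressibility; without it, expressing reachability would require transitive closure and fall outside FO. Adapting the structure to the directed setting amounts to reading the edge relation of $\tau$ as an arc relation (so $E(u,v)$ now encodes the arc $u \to v$), and has no impact on the construction above.
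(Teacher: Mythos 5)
Your proposal is correct and follows essentially the same route as the paper: a bounded-length journey subformula built as a finite disjunction over path lengths, followed by an existential quantification of $k$ witnesses with pairwise journey requirements, the formula length depending only on $k$ and $\Card{T(D)}$. The only substantive divergence is how the journey length is bounded --- you cap it at $\Card{T(D)}$ by insisting on strictly increasing times via $R$, whereas the paper uses non-strict comparisons $t_i \leq t_{i+1}$ and caps the length at $k$; note that your bound is only valid under the strict semantics (a non-strict journey can make many hops within a single time step), so your choice of journey semantics is doing real work there, while your explicit disjunction over concrete times to handle the indexed relations $L_t$ and your explicit distinctness constraints $f_V(x_i)\neq f_V(x_j)$ are, if anything, more careful than the paper's formulation.
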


\begin{proof}
We first define a sentence that expresses the idea of a journey.

\begin{eqnarray*}
\mathcal{J}_{n}(u,v) & := & \mathcal{J}_{n-1}(u,v) \vee\\
&&\exists x_{1},\ldots,x_{n+1}, t_{1}, \ldots, t_{n}(\bigwedge_{i \in [1,n+1]}Vx_{i} \wedge \bigwedge_{i \in [1,n]}Tt_{i})\wedge\\
&&(\bigwedge_{i\in [1,n]} Ex_{i}x_{i+1} \wedge L_{t_{i}}x_{i}x_{i+1})\wedge\\
&&(\bigwedge_{i\in[1,n]}t_{i}\leq t_{i+1}) \wedge (f_{V}(u) = f_{V}(x_{1}) \wedge f_{V}(v) = f_{V}(x_{n+1})).
\end{eqnarray*}

Using this the problem can be succintly defined in first order logic as:

\[
\exists v_{1},\ldots,v_{k}\forall x,y (f_{V}(x) = f_{V}(v_{i}) \wedge f_{V}(y) = f_{V}(v_{j}) \Rightarrow \mathcal{J}_{k}(x,y))
\]

Note that the length of the sentence depends upon only $k$ and $\Card{T(G)}$ and that we again avoid running afoul of the compactness of first order logic by restricting the possible journey lengths to $k$ for each instance.
\end{proof}

Combining Theorem~\ref{thm:ltw_MC} and Lemma~\ref{lem:SCDC_exp} we obtain the tractability result.

\begin{corollary}
\textsc{SCDC} is fixed-parameter tractable for dynamic graphs of bounded local treewidth.
\end{corollary}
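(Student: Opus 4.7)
The plan is to invoke the two preceding results almost directly. By Lemma~\ref{lem:SCDC_exp}, \textsc{SCDC} is captured by a first order sentence $\phi_{k,\Card{T(G)}}$ whose length is bounded by some computable function of $k$ and $\Card{T(G)}$ alone; in particular $\Card{\phi}$ does not depend on $\Card{V(G)}$. This step is essentially already done --- I would just emphasise that the journey formula $\mathcal{J}_{k}$ and the outer existential/universal block together have size bounded by a polynomial in $k$ and $\Card{T(G)}$.

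Next I would observe that since $G$ has bounded local treewidth at every $t \in T(G)$ by hypothesis, we are in the setting of Theorem~\ref{thm:ltw_MC}. Applying that theorem to the structure $\mathcal{A}(G)$ (or its linear time variant) and the sentence $\phi_{k,\Card{T(G)}}$ yields an algorithm for \textsc{MC}$(\mathcal{A}(G),\phi)$ running in time $g(\Card{\phi}+\Card{T(G)})\cdot\Card{\mathcal{A}(G)}^{O(1)}$ for some computable $g$.

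Finally I would compose the two bounds: substituting the bound on $\Card{\phi}$ gives a running time of the form $h(k+\Card{T(G)})\cdot\Card{\mathcal{A}(G)}^{O(1)}$ for a computable $h$, and since $\Card{\mathcal{A}(G)}$ is polynomial in the size of the input dynamic graph, this is exactly the definition of fixed-parameter tractability with combined parameter $k+\Card{T(G)}$.

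The only thing to be careful about is the parameter bookkeeping: I must make sure the dependence of $\Card{\phi}$ on $\Card{T(G)}$ (arising both from the explicit quantification over time in $\mathcal{J}_{k}$ and, if using the linear-time structure, from the encoded $\leq$ operator on times) is absorbed into a single computable function of the parameters, rather than hiding a dependence on $\Card{V(G)}$. I do not anticipate any real obstacle beyond this accounting --- the substantive work has already been carried out in Lemma~\ref{lem:SCDC_exp} and Theorem~\ref{thm:ltw_MC}.
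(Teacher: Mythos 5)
Your proposal is correct and matches the paper's argument exactly: the paper likewise obtains the corollary by combining Lemma~\ref{lem:SCDC_exp} (FO expressibility with formula length depending only on $k$ and $\Card{T(G)}$) with Theorem~\ref{thm:ltw_MC}. Your extra care about the parameter bookkeeping is sound but not a departure from the paper's route.
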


As first order logic is a subset of monadic second order logic we may also use Theorem~\ref{thm:tw_MC}.

\begin{corollary}
\textsc{SCDC} is fixed-parameter tractable for dynamic graphs of bounded treewidth.
\end{corollary}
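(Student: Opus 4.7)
The plan is to combine Lemma~\ref{lem:SCDC_exp} with Theorem~\ref{thm:tw_MC}, exploiting the fact that every FO sentence is syntactically a valid MSO sentence. First I would invoke Lemma~\ref{lem:SCDC_exp} to obtain an FO sentence $\phi$ expressing \textsc{SCDC}, with length $\Card{\phi}$ bounded by a computable function of $k$ and $\Card{T(G)}$. Because monadic second order logic extends first order logic (it adds quantification over unary relation variables without removing any FO construct), the same $\phi$ qualifies as an MSO sentence of the same length.

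Next I would apply Theorem~\ref{thm:tw_MC} to the dynamic graph $G$ together with $\phi$. The theorem delivers fixed-parameter tractability of \textsc{MC}$(G,\phi)$ with combined parameter $\tw(G)+\Card{\phi}+\Card{T(G)}$. Since $\tw(G)\leq k'$ is bounded by hypothesis, and $\Card{\phi}$ is itself a computable function of $k$ and $\Card{T(G)}$, this combined parameter is bounded by a computable function of the declared \textsc{SCDC} parameter $k+\Card{T}$ together with the treewidth bound $k'$. The running time is therefore of the form $g(k,\Card{T(G)},\tw(G))\cdot\Card{G}^{O(1)}$, which is exactly the required FPT bound.

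There is no substantive obstacle here: the combination is essentially mechanical bookkeeping once Lemma~\ref{lem:SCDC_exp} is in hand. The only point meriting care is ensuring that $\phi$ is interpreted over a structure whose Gaifman graph actually has bounded treewidth under the stated hypothesis; since $\phi$ uses the vocabulary of the Local Treewidth Preserving Structure (including $V$, $E$, $T$, $L_t$, $f_V$, and the derived $\leq$), and the Gaifman graph of that structure on a dynamic graph with per-snapshot treewidth $\leq k'$ consists of the disjoint snapshots together with a clique on the time elements, its treewidth is bounded by $\max\{k',\Card{T(G)}-1\}$, matching the hypothesis of Theorem~\ref{thm:tw_MC} directly.
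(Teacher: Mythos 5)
Your argument is correct and is exactly the paper's route: the paper likewise observes that first order logic is contained in monadic second order logic and then applies Theorem~\ref{thm:tw_MC} to the sentence from Lemma~\ref{lem:SCDC_exp}, with the treewidth of the Gaifman graph controlled by $\max\{k',\Card{T(G)}-1\}$ as you note. No gaps.
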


Another interesting dynamic graph problem is whether a sending vertex can receive a reply without the reply having to travel too far.

\pcproblem{Short Message Return Path (SMRP)}{A dynamic digraph $D=(V,A,T,\zeta)$ with an identified vertex $v\in V$ and an integer $k$.}{$k+\Card{T}$.}{Is there a journey from each $u \in N^{out}(v)$ to $v$ of length at most k?}

\begin{lemma}\label{lem:SMRP_exp}
\textsc{SMRP} is expressible in first order logic.
\end{lemma}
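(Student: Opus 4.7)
The plan is to reuse the journey predicate $\mathcal{J}_{k}$ constructed in the proof of Lemma~\ref{lem:SCDC_exp} and express SMRP as a universal implication ranging over the out-neighbours of the identified vertex. Since the identified vertex is part of the input, I first extend the vocabulary of $\mathcal{A}(G)$ with a constant symbol $c$ denoting (any time-instance of) this distinguished vertex, in the same spirit as the constant $s$ introduced in Section~\ref{sec:ordered_time_struct}; equivalently, one could add a unary marking predicate $P_{v}$ that selects all time-instances of the identified vertex, and the two formulations are interchangeable here.

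The next step is to express ``$u$ is an out-neighbour of the distinguished vertex'' in first order logic. In $\mathcal{A}(G)$ this is simply $\exists t (Tt \wedge L_{t}cu)$, stating that the arc from $c$ to $u$ is active at some time. Combining this with the journey predicate yields the sentence
\[
\phi_{SMRP} := \forall u ((Vu \wedge \exists t (Tt \wedge L_{t}cu)) \Rightarrow \mathcal{J}_{k}(u,c)).
\]
Because $\mathcal{J}_{k}$ identifies its endpoints modulo $f_{V}$, this sentence does not depend on which particular time-instance of the distinguished vertex is picked out by $c$, and universally quantifying over all time-instances $u$ of every out-neighbour vertex is equivalent to quantifying over the out-neighbour vertices themselves. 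The length-at-most-$k$ restriction is enforced syntactically by choosing $\mathcal{J}_{k}$ rather than an unbounded journey predicate.

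Finally I would check the length bound. The predicate $\mathcal{J}_{k}$ has size bounded by a function of $k$ and $\Card{T(G)}$ (as already shown in the proof of Lemma~\ref{lem:SCDC_exp}), the remaining atoms are of constant size, and the whole sentence wraps just one additional universal and one existential quantifier around $\mathcal{J}_{k}$, so $\Card{\phi_{SMRP}}$ is bounded by a function of $k + \Card{T(G)}$, matching the declared parameter. As before, compactness of first order logic is sidestepped by selecting the appropriate member of a parameterised family of sentences per input instance.

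The only point that requires any care is the treatment of the identified vertex; adding a constant (or a marking predicate) to the vocabulary is standard, and the rest of the argument is essentially bookkeeping on top of Lemma~\ref{lem:SCDC_exp}, so I do not anticipate a substantive obstacle.
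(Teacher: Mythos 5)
Your proposal is correct and follows essentially the same route as the paper: reuse the journey predicate $\mathcal{J}_{k}$ from Lemma~\ref{lem:SCDC_exp} and universally quantify over the out-neighbours of the identified vertex, requiring a journey of length at most $k$ back to it. If anything your version is the more careful one --- the paper simply writes $\forall u (Vu \wedge Evu \Rightarrow \dots \mathcal{J}_{k}(u,v))$ with $v$ used as a free name and an apparently garbled conjunct $f_{V}(u)=f_{V}(u)$, whereas you make the constant for the identified vertex and the out-neighbour test explicit; these are bookkeeping refinements, not a different argument.
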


\begin{proof}
Using the ability to express the idea of a journey in first order logic,  \textsc{SMRP} is simple to express.
\[
\forall u (Vu \wedge Evu \Rightarrow \exists u' (f_{V}(u)=f_{V}(u) \wedge \mathcal{J}_{k}(u,v)))
\]
\end{proof}

Using Theorems~\ref{thm:ltw_MC} and~\ref{thm:tw_MC} with Lemma~\ref{lem:SMRP_exp} we obtain the following tractability result.

\begin{corollary}
\textsc{Short Message Return Path} is fixed-parameter tractable for dynamic graphs of bounded local treewidth and bounded treewidth.
\end{corollary}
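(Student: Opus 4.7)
The plan is to observe that the corollary is an almost immediate consequence of what has already been established, and to spell out the two cases that the single sentence in the statement is really bundling together. The key input is Lemma~\ref{lem:SMRP_exp}, which shows \textsc{SMRP} is expressible by a first order sentence $\phi$ whose length depends only on $k$ and $\Card{T(G)}$, both of which already lie inside our combined parameter. So the strategy is to reduce \textsc{SMRP} on the dynamic graph $D$ to \textsc{Model Checking} on the Gaifman graph of the appropriate logical translation $\mathcal{A}(D)$, and then invoke the metatheorems from Section~4.

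For the bounded local treewidth case, I would apply Theorem~\ref{thm:ltw_MC} directly: by assumption $D$ has effectively bounded local treewidth at every time $t\in T(D)$, and $\phi$ is a first order sentence, so \textsc{MC}($D,\phi$) is fixed-parameter tractable with parameter $\Card{\phi}+\Card{T(D)}$; since $\Card{\phi}$ is bounded by a computable function of $k+\Card{T(D)}$ by Lemma~\ref{lem:SMRP_exp}, we can fold this back into the intended parameter $k+\Card{T(D)}$ of \textsc{SMRP}. For the bounded treewidth case, the observation is simply that every first order sentence is also a monadic second order sentence, so $\phi$ is a legitimate input to the model checking problem addressed by Theorem~\ref{thm:tw_MC}, which then gives fixed-parameter tractability with parameter $k+\Card{\phi}+\Card{T(D)}$, again absorbable into $k+\Card{T(D)}$.

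I do not expect a genuine obstacle: the proof is a citation chain. The one point worth being careful about is the dependence of $\Card{\phi}$ on $\Card{T(G)}$, which enters through the bounded journey length $k$ in $\mathcal{J}_{k}$ and, if the linear-time construction is used, through the definition of $\leq$ over time elements; in both cases this dependence is computable and the additional parameters are already present, so fixed-parameter tractability is preserved. I would close by noting, as the authors do for \textsc{SCDC}, that the proof yields two separate tractability statements (one per structural restriction) which together justify the ``and'' phrasing in the corollary.
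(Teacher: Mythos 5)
Your proposal matches the paper's argument exactly: the corollary is obtained by combining Lemma~\ref{lem:SMRP_exp} with Theorems~\ref{thm:ltw_MC} and~\ref{thm:tw_MC}, with the first order sentence serving as an MSO sentence in the bounded-treewidth case. Your additional remarks on absorbing $\Card{\phi}$ into the parameter $k+\Card{T(G)}$ are correct and consistent with how the paper handles the analogous \textsc{SCDC} corollaries.
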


\subsection{Transferring Static Results}

The transference of Courcelle's and Frick \& Grohe's metatheorems also imply the transference of previous results on static graph, with only minor changes in problem formulation. Most immediately, these results hold where we relax requirements such that the edges required for the solution exist at some point.

We demonstrate by ``temporalizing'' \textsc{$k$-Coloring} and the associated MSO formula that demonstrates it fixed-parameter tractability for graphs of bounded treewidth.

\pcproblem{Permanent Coloring}{A dynamic graph $G=(V,E,T,\zeta)$, an integer $k$.}{$k+\Card{T}$.}{Is there a static assignment of $k$ colors to $V$ that is a proper coloring of $G$ at each time $t\in T$?}

\begin{lemma}
\textsc{Permanent Coloring} is expressible in monadic second order logic.
\end{lemma}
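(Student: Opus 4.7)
The plan is to mimic the familiar \textsc{MSO} formulation of $k$-\textsc{Coloring}---existentially quantifying $k$ unary set variables $C_{1},\ldots,C_{k}$ to play the role of colour classes and asserting (i) every vertex gets exactly one colour and (ii) no edge is monochromatic---and to augment it with the extra requirement that the assignment be \emph{static}: every temporal copy of an underlying vertex must receive the same colour. The dynamic ``proper at every time $t$'' condition falls out for free from the way the Gaifman graph is constructed.

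I would work over the Local Treewidth Preserving Structure $\mathcal{A}(G)$, whose vertex-like elements are the temporal copies $v^{t}$ and whose function $f_{V}$ captures the ``same underlying vertex'' relation via $f_{V}(x)=f_{V}(y)$. Because the edges in $\mathcal{A}(G)$ only connect copies $u^{t},v^{t}$ at times where $uv$ is actually present, a single universally quantified ``no monochromatic edge'' conjunct will enforce properness in every snapshot simultaneously. The sentence would take the form
\[
\exists C_{1}\cdots\exists C_{k}\,\bigl(\psi_{\mathrm{part}}\wedge\psi_{\mathrm{stat}}\wedge\psi_{\mathrm{prop}}\bigr),
\]
with $\psi_{\mathrm{part}}$ the formula $\forall x\,(V(x)\to\bigvee_{i}C_{i}(x))$ together with pairwise disjointness of the $C_{i}$; $\psi_{\mathrm{stat}}$ the formula $\forall x\forall y\,(f_{V}(x)=f_{V}(y)\to\bigwedge_{i}(C_{i}(x)\leftrightarrow C_{i}(y)))$; and $\psi_{\mathrm{prop}}$ the formula $\forall x\forall y\,(E(x,y)\to\bigwedge_{i}\neg(C_{i}(x)\wedge C_{i}(y)))$.

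The only non-routine step is $\psi_{\mathrm{stat}}$: unlike ordinary $k$-\textsc{Coloring}, here two syntactically distinct universe elements can represent the same underlying graph vertex, and one must recruit the vocabulary to identify them; using $f_{V}$ this collapses into a single equation, while over the treewidth-preserving structure one would instead use the vocabulary's vertex-identity predicate. Verification that $\mathcal{A}(G)\models\phi$ iff $G$ admits a proper static $k$-colouring is then immediate: the quantified $C_{i}$ partition $V$, $\psi_{\mathrm{stat}}$ forces this partition to be constant along vertex identity (so it descends to a single colouring of $V(G)$), and $\psi_{\mathrm{prop}}$ rules out monochromatic edges in every snapshot. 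The resulting sentence has length $O(k^{2})$, independent of $\Card{V(G)}$ and $\Card{T(G)}$, establishing \textsc{MSO} expressibility and setting up Theorem~\ref{thm:tw_MC} to deliver fixed-parameter tractability.
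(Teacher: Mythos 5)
Your proposal is correct and takes essentially the same approach as the paper: the standard MSO $k$-colouring sentence (existential colour classes, exactly-one-colour, no monochromatic edge at any time) adapted to the dynamic structure. If anything you are more careful than the paper's own formula, which quantifies $\forall x,y,t$ over a ternary edge relation $Exyt$ and does not explicitly include your $\psi_{\mathrm{stat}}$ conjunct forcing temporal copies of the same underlying vertex to agree via $f_{V}$ --- a condition that is needed whenever the universe contains one element per vertex--time pair.
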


\begin{proof}
\[
\begin{array}{l}
\exists V_{1},\ldots,V_{k}\forall x,y,t(\bigvee_{i\in[1,k]}\bigwedge_{j\neq i}(V_{i}x\wedge\neg V_{j}x)\wedge\bigvee_{i\in[1,k]}\bigwedge_{j\neq i}(V_{i}y\wedge\neg V_{j}y)\wedge\\
Vx\wedge Vy\wedge Tt \wedge Exyt \Rightarrow \bigwedge_{i\in[1,k]} \neg (V_{i}x\wedge V_{i}y))
\end{array}
\]
\end{proof}

Alternatively we can reframe the problem as requiring a solution that is true at every point in time.

\pcproblem{Evolving Coloring}{A dynamic graph $G=(V,E,T,\zeta)$, an integer $k$.}{$k+\Card{T}$.}{Is there a (possibly different) proper $k$-coloring of $G$ at each time $t\in T$?}

\begin{lemma}
\textsc{Evolving Coloring} is expressible in monadic second order logic.
\end{lemma}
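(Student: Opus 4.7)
The plan is to exploit the fact that the universe of $\mathcal{A}(G)$ already contains a separate element $v^{t}$ for each vertex-time pair with $t \in \zeta_{v}(v)$. Consequently, a single tuple of MSO set variables $V_{1},\ldots,V_{k}$ over vertex elements automatically encodes a (potentially distinct) coloring at every time step: placing $v^{t_{1}}$ in $V_{1}$ and $v^{t_{2}}$ in $V_{2}$ is allowed and corresponds to choosing different colors for $v$ at the two times, which is exactly the freedom that separates this problem from \textsc{Permanent Coloring}.

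First I would existentially guess the colorings via $\exists V_{1},\ldots,V_{k}$, then conjoin two first-order parts. The first is a partition clause, built from the usual $\bigvee_{i\in[1,k]}\bigwedge_{j\neq i}(V_{i}x\wedge\neg V_{j}x)$ pattern, ensuring every vertex element lies in exactly one color class. The second is a propriety clause of the shape $\forall x\,\forall y\,\forall t\,\bigl((Vx\wedge Vy\wedge Tt\wedge Exyt)\Rightarrow\bigwedge_{i\in[1,k]}\neg(V_{i}x\wedge V_{i}y)\bigr)$, using the same time-indexed edge predicate as in the \textsc{Permanent Coloring} sentence on the preceding page. This jointly asserts that at each time $t\in T$ the active subgraph is properly $k$-colored by $V_{1},\ldots,V_{k}$ restricted to the $v^{t}$-elements.

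The key step to spell out, and the only conceptual subtlety, is the deliberate \emph{omission} of any $f_{V}$-consistency constraint tying $V_{i}(v^{t_{1}})$ to $V_{i}(v^{t_{2}})$. It is precisely this omission that permits independent colorings across times, so that the sentence captures \textsc{Evolving Coloring} rather than its permanent variant; I would explicitly note this to contrast it with the previous lemma. The resulting sentence has size $O(k^{2})$, is independent of $\Card{T(G)}$, and combined with Theorem~\ref{thm:tw_MC} immediately yields fixed-parameter tractability of \textsc{Evolving Coloring} on dynamic graphs of bounded treewidth.
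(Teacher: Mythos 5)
Your sentence is correct, but it takes a genuinely different route from the paper's. The paper expresses \textsc{Evolving Coloring} by moving the time quantifier outside the second-order existentials, writing $\forall t\,\exists V_{1},\ldots,V_{k}\,\forall x,y(\ldots)$, so that a fresh partition into colour classes is guessed for each time step; the contrast with \textsc{Permanent Coloring} is then visible purely in the quantifier prefix. You instead keep the prefix $\exists V_{1},\ldots,V_{k}$ and rely on the universe of $\mathcal{A}(G)$ containing a distinct element $v^{t}$ for each vertex--time pair, with the deliberate omission of any $f_{V}$-consistency clause doing the work of decoupling the colourings across times. Both give an $O(k^{2})$ sentence independent of $\Card{T(G)}$ and feed into Theorem~\ref{thm:tw_MC} equally well. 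What the paper's version buys is robustness to the encoding: it expresses the right property whether or not vertices are time-stamped, and in particular it does not hinge on the (somewhat ambiguously specified) convention that $Exyt$ relates the time-$t$ copies $u^{t},v^{t}$ rather than time-independent vertex elements --- if $E$ were interpreted on the latter, your sentence would silently collapse back to \textsc{Permanent Coloring}, so you should state that convention explicitly. What your version buys is a purely existential second-order prefix and an explicit articulation of a point the paper leaves implicit, namely that on a time-stamped universe the \emph{absence} of an $f_{V}$ constraint is exactly what separates the evolving from the permanent variant; indeed, your observation highlights that the paper's \textsc{Permanent Coloring} formula, read over the same time-stamped universe, arguably needs such a constraint to mean what it claims.
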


\begin{proof}
\[
\begin{array}{l}
\forall t \exists V_{1},\ldots,V_{k}\forall x,y(\bigvee_{i\in[1,k]}\bigwedge_{j\neq i}(V_{i}x\wedge\neg V_{j}x)\wedge\bigvee_{i\in[1,k]}\bigwedge_{j\neq i}(V_{i}y\wedge\neg V_{j}y)\wedge\\
Vx\wedge Vy\wedge Tt \wedge Exyt \Rightarrow \bigwedge_{i\in[1,k]} \neg (V_{i}x\wedge V_{i}y))
\end{array}
\]
\end{proof}

Then by Theorem~\ref{thm:tw_MC}:

\begin{corollary}
\textsc{Permanent Coloring} and \textsc{Evolving Coloring} are fixed parameter tractable on graphs of bounded treewidth.
\end{corollary}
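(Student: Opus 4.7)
The plan is to apply Theorem~\ref{thm:tw_MC} directly to the MSO sentences furnished by the two immediately preceding lemmas. First I would observe that the sentences $\phi_{PC}$ and $\phi_{EC}$ exhibited for \textsc{Permanent Coloring} and \textsc{Evolving Coloring} each have length depending only on $k$ (roughly $O(k^2)$), so $|\phi|$ is bounded by a function of the declared parameter $k + |T(G)|$. In particular neither formula needs to grow with $|V(G)|$.

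Next I would invoke the hypothesis that the graph class has bounded treewidth, so $\tw(G_t) \le k'$ at every time $t \in T(G)$ for the class-defining constant $k'$. By the Treewidth Preserving Structure construction, the Gaifman graph of $\mathcal{A}(G)$ then has treewidth bounded by $\max\{k'+1, |T(G)|-1\}$. Feeding this bound, together with $|\phi_{PC}|$ (resp. $|\phi_{EC}|$) and $|T(G)|$, into Theorem~\ref{thm:tw_MC} gives an FPT algorithm whose combined parameter is a computable function of $k + k' + |T(G)|$. Since $k'$ is a constant of the class and $k + |T(G)|$ is the problem's declared parameter, this is FPT in the required sense. The corollary then follows by applying this argument once for $\phi_{PC}$ and once for $\phi_{EC}$.

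I expect the only real obstacle to be bookkeeping at the interface between the logical syntax and the formal vocabulary. Specifically, the atomic subformulae $Exyt$, $Vx$, and $Tt$ appearing in the lemmas must be reinterpreted over the vocabulary $\{V^{\mathcal{A}}, L_v^{\mathcal{A}}, \varXi^{\mathcal{A}}, T^{\mathcal{A}}, R^{\mathcal{A}}\}$ of the Treewidth Preserving Structure, with $Exyt$ rewritten as $\varXi(x,y,t)$ and the quantification $\forall t$ guarded by $T^{\mathcal{A}}t$ so that it ranges over time elements rather than over the whole universe. These rewrites inflate each sentence by only a constant factor and preserve MSO-expressibility, so the FPT conclusion is unaffected. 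Once this rewriting is performed, the corollary is an immediate two-line consequence of Theorem~\ref{thm:tw_MC}.
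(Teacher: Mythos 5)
Your proposal is correct and matches the paper's argument, which simply derives the corollary by combining the two expressibility lemmas with Theorem~\ref{thm:tw_MC}. Your extra remark about rewriting $Exyt$ over the vocabulary of the Treewidth Preserving Structure is a reasonable bit of bookkeeping that the paper glosses over, but it does not change the approach.
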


It is easy to see that, although the formul\ae{} become somewhat more complex, it is a simple matter to translate existing expressibility results (often already in the context of parameterized tractability) into results for dynamic graphs. Thus the permanent and evolving versions of interesting problems such as \textsc{Dominating Set}, \textsc{Independent Set}, \textsc{Vertex Cover}, \textsc{Clique} and \textsc{Subgraph Isomorphism} are fixed-parameter tractable on dynamic graphs of bounded (local) treewidth (Flum \& Grohe~\cite{FlumGrohe06} give appropriate static results).

\section{Dynamic Graph Classes with Bounded (Local) Treewidth}

One broad class of dynamic graphs that has proven interesting due to the connectivity problems induced are sparse dynamic graphs. There are several definitions of what constitutes a sparse graph in the dynamic context, with two common definitions involving Markovian processes and bounded degree graphs. Both of these classes in fact have bounded local treewidth.

Flum \& Grohe~\cite{FlumGrohe06} note that given a graph $G$ of maximum degree $d$ we have $\ltw(G,r) \leq d^{r}$. Thus the class of bounded degree dynamic graphs has bounded local treewidth.

As mentioned, sparse graphs may also be defined by Markovian processes on the edges of the graph where the probability of an edge existing, and continuing to exist are bounded (e.g.,~\cite{ClementiMS11}). This also leads to bounded local treewidth for such graphs.

\begin{theorem}
Let $G=(G_{1},\ldots,G_{t})$ be a edge-Markovian dynamic graph where the probability of a non-edge becoming an edge is $p \leq c_{1}/n$ and the probability of an edge becoming a non-edge is $q \leq 1-c_{2}/n$, then the expected maximum degree for each $G_{i\geq 2}$ is at most $c_{1}+c_{2}$.
\end{theorem}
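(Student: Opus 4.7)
The plan is to bound, for any fixed vertex $v$ and any $i \geq 2$, the expected degree $\mathbb{E}[d_i(v)]$ by linearity of expectation applied to the $n-1$ potential edges incident to $v$. The edge-Markovian assumption acts independently on each vertex pair, so conditioning on the state of $uv$ in $G_{i-1}$ yields
\[
\Pr[uv \in E(G_i)] \;=\; \Pr[uv \in E(G_{i-1})]\,(1-q) \;+\; \Pr[uv \notin E(G_{i-1})]\,p,
\]
a convex combination of the per-step survival probability $1-q$ and the per-step creation probability $p$. Under the stated hypotheses both are of order $1/n$, so each such term is at most $\max\{p,\,1-q\} \leq (c_1+c_2)/n$. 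Summing over the $n-1$ candidate neighbours gives $\mathbb{E}[d_i(v)] \leq (n-1)(c_1+c_2)/n \leq c_1+c_2$, and since $v$ was arbitrary this bounds $\max_v \mathbb{E}[d_i(v)]$, which I read as the intended meaning of ``expected maximum degree''.

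If the literal $\mathbb{E}[\max_v d_i(v)]$ is wanted instead, one extra step is required: each $d_i(v)$ is a sum of independent Bernoulli indicators (independent because edges evolve independently in the edge-Markovian model), so a Chernoff tail bound together with a union bound over the $n$ vertices promotes the per-vertex bound to a bound on the expected maximum, losing only lower-order additive terms in $\log n$. For the application to bounded local treewidth this refinement is not strictly necessary, since it is the per-vertex (maximum-expected) degree that feeds into the $\ltw(G,r) \leq d^{r}$ estimate of Flum \& Grohe cited just above.

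The main obstacle is a sign issue in the hypothesis as written. The inequality $q \leq 1 - c_2/n$ gives $1 - q \geq c_2/n$, which is a \emph{lower} bound on the single-step survival probability of an edge and therefore cannot control the degree from above. To make the conclusion hold one must read the hypothesis as $q \geq 1 - c_2/n$, equivalently $1 - q \leq c_2/n$, so that the edge-retention contribution to the degree is genuinely of order $1/n$ per pair. With this correction in place, the convex-combination estimate above goes through verbatim and no further technical ingredient is needed.
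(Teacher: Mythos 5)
Your proof is correct and follows essentially the same route as the paper's: a one-step conditioning / linearity-of-expectation computation giving $\mathbb{E}[d_{i}(v)] = (1-q)\,\mathbb{E}[d_{i-1}(v)] + p\,(n-1-\mathbb{E}[d_{i-1}(v)]) \leq c_{1}+c_{2}$, merely organised per candidate edge rather than as a recurrence on $|N_{t}(v)|$. Your two side observations are also well taken: the paper's own derivation substitutes $1-q \leq c_{2}/n$ in its second line, so it too silently requires the corrected hypothesis $q \geq 1-c_{2}/n$, and what is actually bounded is $\max_{v}\mathbb{E}[d_{i}(v)]$ rather than the literal $\mathbb{E}[\max_{v} d_{i}(v)]$.
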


\begin{proof}
At any time $t$, for each vertex $v \in V(G)$ we have
\begin{eqnarray*}
\Card{N_{t+1}(v)} &=& (1-q)\cdot\Card{N_{t}(v)} + p(n-1-\Card{N_{t}(v)})\\
&\leq& \frac{c_{2}}{n}\cdot\Card{N_{t}(v)}+\frac{c_{1}}{n}\cdot n -\frac{c_{1}}{n} \cdot \Card{N_{t}(v)} - \frac{c_{1}}{n}\\
&=& c_{2}\cdot\frac{\Card{N_{t}(v)}}{n} + c_{1} -c_{1}\frac{\Card{N_{t}(v)}+1}{n}\\
&\leq& c_{1}+c_{2}
\end{eqnarray*}
\end{proof}

\begin{corollary}
Let $G=(G_{1},\ldots,G_{t})$ be a edge-Markovian dynamic where the probability of a non-edge becoming an edge is $p \leq c_{1}/n$ and the probability of an edge becoming a non-edge is $q \leq 1-c_{2}/n$, then $G$ is expected to have bounded local treewidth.
\end{corollary}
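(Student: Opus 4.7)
The plan is to chain the preceding theorem with the Flum--Grohe observation on bounded degree graphs already cited in the paper. First I would invoke the theorem immediately above: for every $i \geq 2$, each vertex of $G_i$ has expected degree at most $c_1+c_2$, so the expected maximum degree $\Delta(G_i)$ is bounded by $d := c_1+c_2$, a constant depending only on the fixed probabilistic parameters and not on $n$ or $r$.

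Next I would apply the bound $\ltw(H,r)\leq \Delta(H)^{r}$ noted earlier in this section (from Flum \& Grohe). Applied to each snapshot $G_i$, this yields $\ltw(G_i,r)\leq d^{r}$ in expectation. Since this bound depends only on $r$ (with $d$ fixed), the family $\{G_i\}_{i\geq 2}$ has expected local treewidth bounded by the computable function $f(r) := (c_1+c_2)^{r}$, which by definition is what it means for $G$ to have bounded local treewidth. The first snapshot $G_1$ is a single static graph and so contributes only a constant which can be absorbed into $f$ (or handled as an initial condition, depending on how one interprets the Markovian model).

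The main subtlety, and the place I expect a referee to push back, is the interchange between ``expected maximum degree'' and ``bounded local treewidth.'' The preceding theorem controls $\mathbb{E}[\Delta(G_i)]$, but $\ltw(G_i,r)\leq \Delta(G_i)^r$ is a pointwise inequality, so passing to expectations gives $\mathbb{E}[\ltw(G_i,r)] \leq \mathbb{E}[\Delta(G_i)^{r}]$, not $(\mathbb{E}[\Delta(G_i)])^{r}$. I would therefore be explicit that the corollary is a statement about expectations and, if tighter control is desired, note that in the regime $p\leq c_1/n$ one can additionally invoke a Chernoff-type concentration argument on each vertex's degree to obtain a high-probability bound of the same order, after which $\ltw(G_i,r) \leq d^{r}$ holds with high probability. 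Either formulation suffices for the qualitative claim that $G$ has bounded local treewidth in the appropriate probabilistic sense, which is all the corollary asserts.
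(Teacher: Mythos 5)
Your proposal is correct and follows exactly the route the paper intends: the corollary is stated without proof because it is meant to follow immediately by chaining the preceding theorem's expected-degree bound with the Flum \& Grohe observation that $\ltw(G,r)\leq d^{r}$ for graphs of maximum degree $d$. Your additional remark about the gap between $\mathbb{E}[\Delta(G_i)^{r}]$ and $(\mathbb{E}[\Delta(G_i)])^{r}$ is a genuine subtlety the paper silently elides, and flagging it (with the concentration-based fix) only strengthens the argument.
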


We note that the third common definition for sparse dynamic graphs, using some measure of density of the graph, does not lead to bounded local treewidth, and in fact these graphs may have (local) treewidth on the order of at least $\sqrt{\Card{V(G)}}$. Some measures of density lead to bounded local treewidth, but essentially by implying bounded degree.

\section{Conclusion}

The structural parameters treewidth and local treewidth have proven to be very useful in algorithmic design when dealing with hard problems. Aiding in this task are the metatheorems of Courcelle and Frick \& Grohe that provide complexity classifications for large classes of graphs using logical expressibility. As dynamic graph theory and dynamic graph problems become more prominent thanks to the advance of technology, questions of treewidth arise. We have shown that given a finite, bounded temporal context the notions of logical expressibility and the corresponding metatheorems can be extended to provide tools for classification of problems on dynamic graphs. In particular if a dynamic graph has bounded (local) treewidth at every point in time, then we can construct a logical structure, and hence Gaifman graph where the (local) treewidth remains bounded. Furthermore in the finite, bounded temporal context first order and thus second order logic can be augmented (finitely) with relations that can express the temporal notion of before and after. With these tools we can easily classify many problems, including temporal extension of static problems.

A natural focus for extending this work would be to increase the graph classes that can be classified, ideally up to graphs of bounded expansion \emph{\`{a} la} Dvo\v{r}\'{a}k \emph{et al.}~\cite{DvorakKT10}, though of course we cannot do better unless it is also possible for static graphs.

This approach however is limited by the fact that we require the length of the time interval to be finite and a parameter. If the length of time is unbounded (even finite but unbounded), then we do not have the logical expressibility to deal with these problems. Thus certain classes of dynamic graphs are excluded by this condition. However we can still easily express properties over a finite amount of time, or in periodic graphs, which covers a significant amount of real applications, where finite time is a constraint of the setting, rather than an artificial stipulation of the theory.

\bibliographystyle{alpha}
\bibliography{logic}

\begin{thebibliography}{CMMD07}

\bibitem[BDKV11]{BrejovaDKV11}
Bronislava Brejov{\'a}, Stefan Dobrev, Rastislav Kr{\'a}lovic, and Tom{\'a}s
  Vinar.
\newblock Routing in carrier-based mobile networks.
\newblock In {\em SIROCCO}, pages 222--233, 2011.

\bibitem[Ber96]{Berman96}
K.A. Berman.
\newblock Vulnerability of scheduled networks and a generalization of menger\'s
  theorem.
\newblock {\em Networks}, 28:125–134, 1996.

\bibitem[BF03]{BhadraFerreira03}
Sandeep Bhadra and Afonso Ferreira.
\newblock Complexity of connected components in evolving graphs and the
  computation of multicast trees in dynamic networks.
\newblock In Samuel Pierre, Michel Barbeau, and Evangelos Kranakis, editors,
  {\em Ad-Hoc, Mobile, and Wireless Networks, Second International Conference,
  ADHOC-NOW 2003 Montreal, Canada, October 8-10, 2003, Proceedings}, volume
  2865 of {\em Lecture Notes in Computer Science}, pages 259--270. Springer,
  2003.

\bibitem[Bod93]{Bodlaender93}
Hans~L. Bodlaender.
\newblock Dynamic algorithms for graphs with treewidth 2.
\newblock In Jan van Leeuwen, editor, {\em Proceedings of Graph-Theoretic
  Concepts in Computer Science, 19th International Workshop, WG '93}, volume
  790 of {\em Lecture Notes in Computer Science}, pages 112--124, Utrecht, The
  Netherlands, 1993. Springer.

\bibitem[BXFJ03]{XuanFerreiraJarry03}
Binh-Minh Bui-Xuan, Afonso Ferreira, and Aubin Jarry.
\newblock Computing shortest, fastest, and foremost journeys in dynamic
  networks.
\newblock {\em Int. J. Found. Comput. Sci.}, 14(2):267--285, 2003.

\bibitem[CCF09]{CasteigtsCF09}
Arnaud Casteigts, Serge Chaumette, and Afonso Ferreira.
\newblock Characterizing topological assumptions of distributed algorithms in
  dynamic networks.
\newblock In {\em SIROCCO}, pages 126--140, 2009.

\bibitem[CFMS10]{CasteigtsFlocchiniMansSantoro10}
Arnaud Casteigts, Paola Flocchini, Bernard Mans, and Nicola Santoro.
\newblock Deterministic computations in time-varying graphs: Broadcasting under
  unstructured mobility.
\newblock In Cristian~S. Calude and Vladimiro Sassone, editors, {\em
  Theoretical Computer Science - 6th IFIP TC 1/WG 2.2 International Conference,
  TCS 2010, Held as Part of WCC 2010, Brisbane, Australia, September 20-23,
  2010. Proceedings}, volume 323 of {\em IFIP}, pages 111--124. Springer, 2010.

\bibitem[CFQS11]{CasteigtsFQS11}
Arnaud Casteigts, Paola Flocchini, Walter Quattrociocchi, and Nicola Santoro.
\newblock Time-varying graphs and dynamic networks.
\newblock In {\em ADHOC-NOW}, pages 346--359, 2011.

\bibitem[CMMD07]{ChaintreauMMD07}
Augustin Chaintreau, Abderrahmen Mtibaa, Laurent Massouli{\'e}, and Christophe
  Diot.
\newblock The diameter of opportunistic mobile networks.
\newblock In {\em CoNEXT}, page~12, 2007.

\bibitem[CMS11]{ClementiMS11}
Andrea E.~F. Clementi, Angelo Monti, and Riccardo Silvestri.
\newblock Modelling mobility: A discrete revolution.
\newblock {\em Ad Hoc Networks}, 9(6):998--1014, 2011.

\bibitem[Cou90]{Courcelle90}
Bruno Courcelle.
\newblock The monadic second-order logic of graphs. i. recognizable sets of
  finite graphs.
\newblock {\em Information and Computation}, 85(1):12--75, 1990.

\bibitem[Cou06]{Courcelle06}
Bruno Courcelle.
\newblock The monadic second-order logic of graphs xvi : Canonical graph
  decompositions.
\newblock {\em Logical Methods in Computer Science}, 2(2), 2006.

\bibitem[CSTV93]{CohenSTV93}
Robert~F. Cohen, Sairam Sairam, Roberto Tamassia, and Jeffrey~Scott Vitter.
\newblock Dynamic algorithms for optimization problems in bounded tree-width
  graphs.
\newblock In Giovanni Rinaldi and Laurence~A. Wolsey, editors, {\em Proceedings
  of the 3rd Integer Programming and Combinatorial Optimization Conference},
  pages 99--112, Erice, Italy, 1993. CIACO.

\bibitem[DF99]{DowneyFellows99}
Rodney~G. Downey and Michael~R. Fellows.
\newblock {\em Parameterized Complexity}.
\newblock Springer, 1999.

\bibitem[Die00]{Diestel2000}
Reinhard Diestel.
\newblock {\em Graph Theory}, volume 173 of {\em Graduate Texts in
  Mathematics}.
\newblock Springer-Verlag, New York, 2$^{nd}$ edition, 2000.

\bibitem[DKT10]{DvorakKT10}
Zdenek Dvo\v{r}\'{a}k, Daniel Kr{\'a}l, and Robin Thomas.
\newblock Deciding first-order properties for sparse graphs.
\newblock In {\em 51th Annual IEEE Symposium on Foundations of Computer
  Science, FOCS 2010}, pages 133--142, Las Vegas, Nevada, 2010. IEEE Computer
  Society.

\bibitem[EF95]{EbbinghausFlum95}
Heinz-Dieter Ebbinghaus and J{\"o}rg Flum.
\newblock {\em Finite model theory}.
\newblock Perspectives in Mathematical Logic. Springer, 1995.

\bibitem[EFT94]{EbbinhausFlumThomas94}
Heinz-Dieter Ebbinghaus, J{\"o}rg Flum, and Wolfgang Thomas.
\newblock {\em Mathematical logic (2. ed.)}.
\newblock Undergraduate texts in mathematics. Springer, 1994.

\bibitem[Fer04]{Ferreira04}
Afonso Ferreira.
\newblock Building a reference combinatorial model for manets.
\newblock {\em IEEE Network}, 18(5):24--29, 2004.

\bibitem[FG01]{FrickGrohe01}
Markus Frick and Martin Grohe.
\newblock Deciding first-order properties of locally tree-decomposable
  structures.
\newblock {\em Journal of the ACM}, 48(6):1184--1206, 2001.

\bibitem[FG06]{FlumGrohe06}
J{\"o}rg Flum and Martin Grohe.
\newblock {\em Parameterized complexity theory}.
\newblock Springer, 2006.

\bibitem[FMS09]{FlocchiniMS09}
Paola Flocchini, Bernard Mans, and Nicola Santoro.
\newblock Exploration of periodically varying graphs.
\newblock In {\em ISAAC}, pages 534--543, 2009.

\bibitem[Fre98]{Frederickson98}
Greg~N. Frederickson.
\newblock Maintaining regular properties dynamically in {\it k}-terminal
  graphs.
\newblock {\em Algorithmica}, 22(3):330--350, 1998.

\bibitem[Hag00]{Hagerup00}
Torben Hagerup.
\newblock Dynamic algorithms for graphs of bounded treewidth.
\newblock {\em Algorithmica}, 27(3):292--315, 2000.

\bibitem[JMMR08]{JacquetMMR08}
Philippe Jacquet, Bernard Mans, Paul M{\"u}hlethaler, and Georgios Rodolakis.
\newblock Opportunistic routing in wireless ad hoc networks: Upper bounds for
  the packet propagation speed.
\newblock In {\em MASS}, pages 384--389, 2008.

\bibitem[KKK00]{KKK00}
David Kempe, Jon~M. Kleinberg, and Amit Kumar.
\newblock Connectivity and inference problems for temporal networks.
\newblock In {\em STOC}, pages 504--513, 2000.

\bibitem[KLO10]{KuhnLO10}
Fabian Kuhn, Nancy~A. Lynch, and Rotem Oshman.
\newblock Distributed computation in dynamic networks.
\newblock In {\em STOC}, pages 513--522, 2010.

\bibitem[KO11]{KuhnO11}
Fabian Kuhn and Rotem Oshman.
\newblock Dynamic networks: models and algorithms.
\newblock {\em SIGACT News}, 42(1):82--96, 2011.

\bibitem[Nie06]{Niedermeier06}
Rolf Niedermeier.
\newblock {\em Invitation to Fixed-Parameter Algorithms}, volume~31 of {\em
  Oxford Lecture Series in Mathematics and Its Applications}.
\newblock Oxford University Press, 2006.

\bibitem[Ste08]{Stewart08}
Iain~A. Stewart.
\newblock On the fixed-parameter tractability of parameterized model-checking
  problems.
\newblock {\em Information Processing Letters}, 106(1):33--36, 2008.

\end{thebibliography}

\end{document}